\newcommand{\ante}{ex-ante} 
\newcommand{\antec}{{\ante} control} 
\newcommand{\post}{ex-post} 
\newcommand{\postc}{{\post} control}
\DeclareMathOperator{\Tr}{Tr}
\DeclareMathOperator{\tr}{Tr} 
\DeclareMathOperator{\trhs}{Tr_{HS}}
\DeclareMathOperator{\diag}{diag}
\newcommand{\ketbra}[1]{\ket{#1}\!\bra{#1}} 
\newcommand{\kb}{\ketbra}
\renewcommand{\le}{\leqslant}
\renewcommand{\ge}{\geqslant}
\newcommand{\C}{\mathbb{C}}
\newcommand{\R}{\mathbb{R}}
\renewcommand{\AA}{\mathcal{A}}
\newcommand{\CC}{\mathcal{C}}
\newcommand{\DD}{\mathcal{D}}
\newcommand{\EE}{\mathcal{E}}
\newcommand{\HH}{\mathcal{H}}
\newcommand{\II}{\mathcal{I}}
\newcommand{\KK}{\mathcal{K}}
\newcommand{\LL}{\mathcal{L}}
\newcommand{\NN}{\mathcal{N}}
\newcommand{\id}{\mathrm{id}}
\renewcommand{\a}{\alpha}
\renewcommand{\b}{\beta}
\newcommand{\ep}{\varepsilon}
\newcommand{\m}{\mu}
\newcommand{\om}{\omega}
\renewcommand{\r}{\rho}
\newcommand{\s}{\sigma}
\newcommand{\ti}{\tilde}
\newcommand{\paren}[1]{{\left( #1 \right)}}
\newcommand{\bigparen}[1]{\bigl({#1}\bigr)}
\newcommand{\Bigparen}[1]{\Bigl({#1}\Bigr)}
\newcommand{\biggparen}[1]{\biggl({#1}\biggr)}
\newcommand{\brac}[1]{{\left\{ #1 \right\}}}
\newcommand{\bigbrac}[1]{\bigl\{{#1}\bigr\}}
\newcommand{\brak}[1]{{\left[ #1 \right]}}
\newcommand{\bigbrak}[1]{\bigl[{#1}\bigr]}
\newcommand{\Bigbrak}[1]{\Bigl[{#1}\Bigr]}
\newcommand{\f}{\frac}
\newcommand{\q}{\quad}
\newcommand{\qq}{\qquad}
\newcommand{\nn}{\notag\\}
\newcommand{\Ad}{\AA}
\newcommand{\inv}{^{-1}}
\newcommand{\norm}[1]{\left\|#1\right\|}
\newcommand{\dr}{``discriminate \& reprepare''}
\newcommand{\dn}{``do nothing''}
\newcommand{\choio}{Choi operator}
\newcommand{\ot}{\otimes}
\newtheorem{theo}{Theorem}
\newtheorem*{theorem*}{Theorem}
\newtheorem*{claim*}{Claim}
\newtheorem{lem}{Lemma}
\newtheorem{conj}{Conjecture}
\begin{document}
\title{Noise suppression by quantum control before and after the noise}
\author{Hiroaki Wakamura}
\email{hwakamura@rk.phys.keio.ac.jp}
\affiliation{Department of Physics, Keio University, Yokohama 223-8522, Japan}
\author{Ry\^uitir\^o Kawakubo}
\email{rkawakub@rk.phys.keio.ac.jp}
\affiliation{Department of Physics, Keio University, Yokohama 223-8522, Japan}
\author{Tatsuhiko Koike}
\email{koike@phys.keio.ac.jp}
\affiliation{Department of Physics, Keio University, Yokohama 223-8522, Japan}
\affiliation{Research and Education Center for Natural Sciences, 
  Keio University, Yokohama 223-8521, Japan}
\date{January, 2017}
\pacs{03.65.Ta, 03.67.-a, 03.67.Pp, 02.30.Yy}
\begin{abstract}
We discuss the possibility of protecting the state of a quantum system
that goes through noise,
by measurements/operations before and after
the noise process.
The aim 
is to seek for the optimal protocol 
that
makes the input and output states as close as possible 
and clarify the role of the measurements therein. 
We consider
two cases;
one can perform
quantum measurements/operations
(i) only after the noise process and  
(ii) both before and after that. 
We prove in the two-dimensional Hilbert space that, 
in the case (i), the noise suppression is  
essentially impossible for all types of noise and, 
in the case (ii), 
the optimal protocol for the depolarizing noise is 
either the {\dn} protocol or 
the {\dr} protocol.
These protocols are not ``truly quantum'' and 
can be considered as classical.
They involve no measurement or
only use the measurement outcomes.
These results 
describe the fundamental limitations in 
quantum mechanics
from the viewpoint of control theory. 
Finally, we conjecture that a statement similar to 
the case (ii) holds for higher-dimensional Hilbert
spaces and present some numerical evidence. 

\end{abstract}
\maketitle

\section{Introduction} \label{sec:intro}

Measurement in quantum theory substantially 
differs from that in classical theory.
One cannot identify the state of a system by 
measurement on a single sample.
In addition, measurement always disturbs the system.
The limitation of manipulating quantum systems
can be understood as being imposed by these 
characteristics of quantum measurement.
An example is the no-cloning theorem~\cite{nocloning} 
which states that it is impossible to create identical 
copies of an arbitrary quantum state. 
If one could make a clone, 
then one could extract complete information 
from a single state by creating infinitely many copies 
thereof, which contradicts quantum mechanics.
Other examples are the facts that 
one cannot discriminate non-orthogonal states 
perfectly
and that 
one cannot measure 
non-commuting observables without errors. 
In presence of such impossibility, 
many researchers study 
how well one can perform these tasks
mentioned above. 
Imperfect cloning~\cite{buzhil,werner98}, 
state discrimination~\cite{helstrom,unambig123} 
and 
uncertainty relations for noise 
and disturbance~\cite{ozawaan,watanabe} 
are examples of 
the studies that make 
a quantitative assessment of
ability to realize the tasks approximately.

We would like to discuss 
an aspect of the limitations of quantum operation 
that one cannot protect states against 
noise. 
Here we use the word ``noise'' in a wide sense so that
it refers to any irreversible dynamics 
induced by environments. 
In classical systems, one can protect a state against the
irreversible dynamics (noise)  if accurate measurements and operations
can be done 
and if the state is not a statistical mixture, 
by taking the complete record of the state before the noise affects the 
system. 
In quantum theory, it is not the case even if the state is 
pure.
Measurement cannot be done accurately and 
disturbs the state.
Nevertheless,
one can still consider operations which
approximately reverse the noise. 
Such approximate operations 
reveal the limit beyond which the noise cannot be 
suppressed any further.
It also is interesting to understand the role which 
measurement plays for the task.

Noise suppression, the attempt to protect a certain class of
states against given noise, is an important 
problem in the field of quantum
control.  Many researchers are working on the problem in various
ways. 
Some try to protect a few states, while others try to
protect all the pure states. 
The approaches are further classified by 
whether one uses \textit{{\post}}\/ control only or \textit{{\ante}}\/ and \textit{{\post}}\/ control together, 
where we mean by {\ante} and {\postc}
the quantum measurements/operations performed 
before and after, respectively, the noise process. 
For the problem of protecting two states, Bra\'{n}czyk {\it et
  al}.~\cite{bramen07} obtained the optimal {\postc} for the dephasing
noise. 
After a while, Mendon\c{c}a {\it et al}.~\cite{mengil08}  gave a
method for constructing the optimal {\postc} for arbitrary noise. 
Compared to the protection of two states, 
protecting all the pure states is more
difficult and challenging. 
Zhang {\it et al.}~\cite{zhang08} pointed out a part of the
difficulty; they prove that {\postc} alone cannot suppress the
depolarizing noise at all. 
Korotkov and Keane~\cite{korotkov04} considered {\antec} as well as
{\postc} and found that it is possible to protect, 
to some extent, all the pure states of a qubit
against the amplitude damping noise. 
Along this line, Wang {\it et al.}~\cite{wang14} made a further study
using numerical methods. 
We remark that there are still many other approaches 
to reducing the effect of decoherence
in a wider context~\cite{Sho95,Kni96,Reim05,Lid98,VioLlo98}.

\begin{figure}[t]
 \centering 
\includegraphics[width=8.5cm]{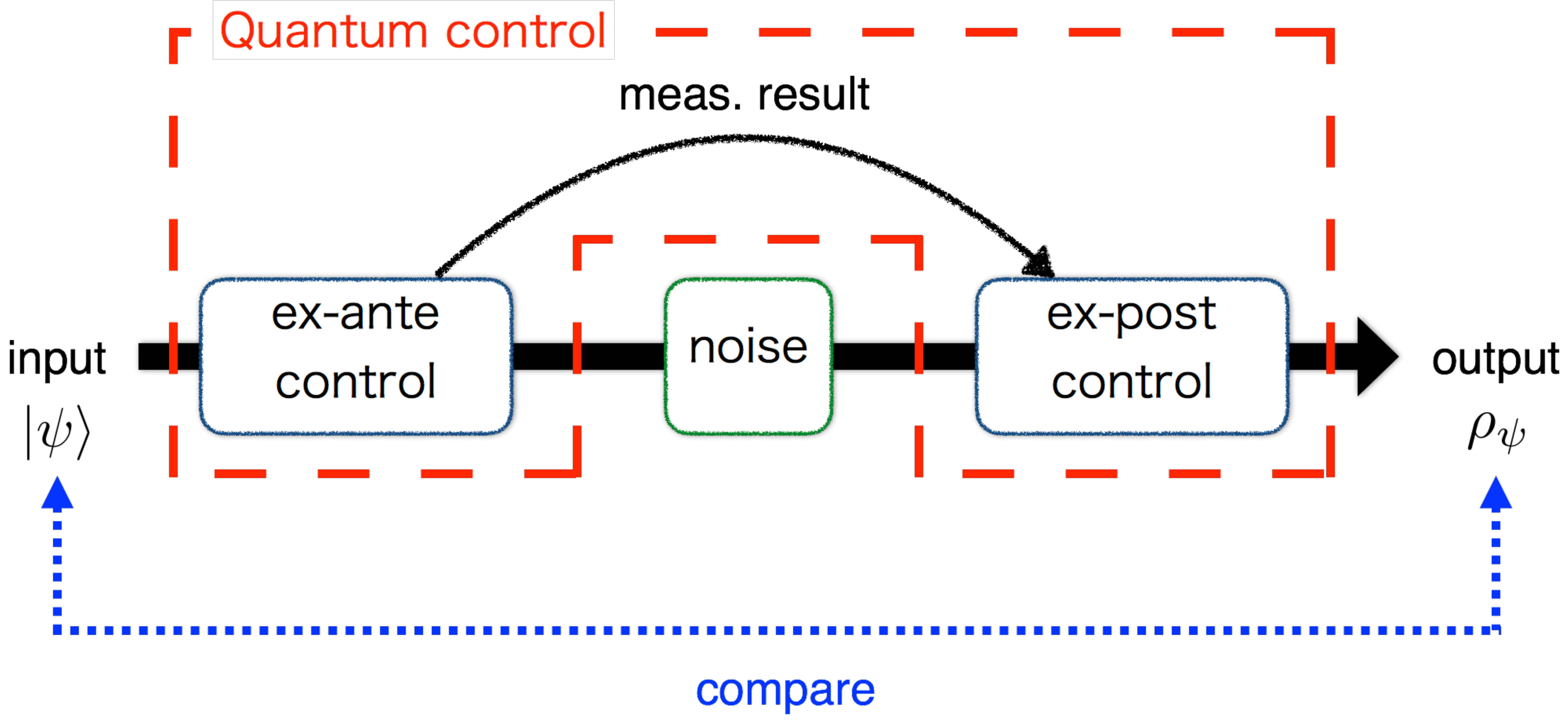}
\caption{The schematic diagram of the quantum control (\antec{} and \postc{}).}
\label{fig:antepost}
\end{figure}

In this paper, we discuss protection of all the pure states against noise
by \ante{} and \post{} control (see Fig.~\ref{fig:antepost}).  
First, we restrict ourselves to 
the scheme with 
\postc{} alone and 
discuss if it can suppress the effect of noise. 
In the case of the two-dimensional Hilbert space (a qubit), 
we prove that the suppression is impossible 
for any type of noise (Theorem~\ref{theo:kekka1}). 
This result, generalizing a part of Ref.~\cite{zhang08}, reveals the decisive
necessity of {\antec} 
if one wishes to protect all the pure states.
Next, we consider the scheme with both 
\ante{} and \post{} control.
We focus on the depolarizing noise, 
whose isotropic property make the suppression task more difficult,
and thus suitable to see the  improvement thanks to {\antec}.
Then we show, in Theorem~\ref{theo:kekka2}, that the optimal control protocols are rather simple and easy to understand classically and {\antec} is useful only when the noise is strong (Theorem~\ref{theo:kekka2}). 
Finally, we conjecture that essentially the same results hold in the case of higher dimensional Hilbert spaces, which is supported by a numerical calculation in the case of three-dimensional Hilbert space.

The paper is organized as follows.
In Sec.~\ref{sec:pre}, the mathematical tools used in 
our discussion 
are
reviewed. 
In Sec.~\ref{sec:qucon}, the setup of quantum control with 
\ante{} and \post{} control is explained. 
In Sec.~\ref{sec:poscon}, noise suppression for a qubit 
by \postc{} alone
is discussed in general. 
We consider 
in Sec.~\ref{sec:anposcon} 
\ante{}-\postc{} 
for a qubit 
under the depolarizing noise 
and present a result on the optimal protocols, 
which is proved in Sec.~\ref{pfkekka2}. 
In Sec.~\ref{sec:numcon}, 
we propose a conjecture for higher-dimensional Hilbert spaces 
and show some numerical evidence. 
Sec.~\ref{sec:sum} is devoted to conclusion and discussions. 
The Appendix lists the theorems used in the text.

\section{Basics of quantum operations}
\label{sec:pre}

In this section, we shall introduce basic mathematical tools and
notation used in 
our analysis. 
Throughout the paper, we consider 
physical
systems which are
represented by 
a 
finite-dimensional 
Hilbert space. 
Let $\mathcal{H}$ be  such a Hilbert space. 
Let $\mathcal{L(H)}$ be the set of all linear
operators on $\mathcal{H}$.  

An operator $O\in\mathcal{L(H)}$ is said positive and denoted by 
$O\ge0$ if 
$\braket{\psi|O|\psi}\ge0$ holds for any $\ket{\psi}\in \mathcal{H}$. 
A quantum state is described by a density operator $\rho\in\LL(\HH)$ 
such that $\rho\ge0$ and $\tr \rho=1$. 
{\it Fidelity}\/ measures closeness of two states 
and is 
defined by  
\begin{align}
 F(\rho,\ketbra{\psi}):=\braket{\psi|\rho|\psi} 
 \label{eq-def-fidelity}
\end{align}
in the case that at least one of the two states is 
pure~\cite{Nielsen}.
The fidelity represents the probability of measuring $\ket{\psi}$ when
the state is 
$\rho$, and it is equal to unity if and only if $\rho=\ketbra{\psi}$.

A linear map $\mathcal{E}:\mathcal{L(H)}\rightarrow\mathcal{L(H)}$ is
said  {\it positive}\/ if $O\ge0$ implies
$\mathcal{E}(O)\ge0$. 
The map $\mathcal{E}$
is said {\it completely positive (CP)}\/ if the map 
$\mathcal{E}\otimes {\id}_n$ 
is positive for every positive integer $n$, 
where $\id_n$ denotes the identity map on $\mathcal{L}(\C^n)=\C^{n\times n}$. 
The map $\mathcal{E}$ is
said 
{\it trace-preserving}\/ if $\tr\mathcal{E}(O)=\tr O$  for
any $O\in\mathcal{L(H)}$. 
A trace-preserving completely positive map is called a {\it TPCP map}. 
It is known that
any physical evolution of a quantum state corresponds to a TPCP map, 
and vice
versa~\cite{Nielsen}. 

An example of CP map is 
\begin{align}
\Ad_A(O):=AOA^\dag
\label{eq-sand}
\end{align}
defined for $A\in\mathcal{L(H)}$. 
It is known that
any CP map can be expressed as a sum of such $\AA_A$ 
(the Kraus representation). 
The CP map $\AA_A$ is trace-preserving only if $A$ is a
unitary operator. 
Another example of TPCP map, which is important 
in our discussion below, is 
mixing with the completely mixed state, or 
{\it
  depolarizing noise} 
$\mathcal{D}_\ep$, 
defined by 
\begin{align}
 \mathcal{D}_\varepsilon(\rho):=(1-\varepsilon)\rho+\varepsilon\frac{1}{d}\tr
 \rho, 
 \label{depo-noise}
\end{align}
where 
$\varepsilon$ is a parameter between 0 and 1,  
$d$ is the dimensionality of $\mathcal{H}$ 
and the operator $1/d$ is
the completely mixed state. 
The map 
outputs the completely mixed state with probability $\varepsilon$ and 
leaves the input state untouched with probability $1-\varepsilon$. 

A family 
$\{\II_\omega\}_{\omega\in\Omega}$ of CP maps 
with $\sum_{\omega\in\Omega}\II_\omega$ being trace-preserving is
called a {\em CP instrument}.  
It is known that any physical measuring process corresponds to a
CP instrument, and vice versa~\cite{Ozawa84}. 
In this paper, we assume that the number of the measurement outcomes is
finite. 
The state evolution by the measurement is described as 
\begin{align}
 \rho \mapsto \frac{\II_\omega(\rho)}{\tr\II_\omega(\rho)},\quad
\text{with probability} \quad
\tr\II_\omega(\rho).
\end{align}

A set 
$\brac{M_\om}{}_{\om\in\Omega}$ 
of positive operators on $\HH$
such that 
$\sum_{\om\in\Omega} M_\om=1$ is called a {\em POVM}. 
A CP instrument $\brac{\Tr\II_\om}_{\om\in\Omega}$ defines a POVM by 
$\Tr\II_\om(\rho)=\Tr\rho M_\om$. 
We shall say that 
such 
a POVM $\brac{M_\om}{}_{\om\in\Omega}$ 
and 
a CP instrument 
$\brac{\II_\om}{}_{\om\in\Omega}$ 
are associated with each other. 
A POVM has the information on all statistical properties of the 
measurement outcomes, while a CP instrument $\brac{\II_\om}$ 
has still more information on the state after the measurement. 
Any CP instrument associated with $\brac{M_\om}{}_{\om\in\Omega}$ 
can be written as (See Hayashi~\cite{Hayashi06}, p.189, Theorem~7.2)
\begin{align}
  \II_\om=\KK_\om\circ \Ad_{\sqrt{M_\om}}, 
\label{lem:hayashi}
\end{align}
where 
$\mathcal{K}_\om$ 
is 
a TPCP map. 
We will call $\brac{\AA_{\sqrt{M_\om}}}_{\om\in\Omega}$ 
a simple CP instrument associated with 
$\brac{M_\om}_{\om\in\Omega}$. 

The set $\LL(\HH)$ can be 
regarded 
as a Hilbert space with 
the Hilbert-Schmidt inner product
$\braket{X,Y}_{\mathrm{HS}}:=\Tr X^\dagger Y$. 
Then a linear map $\EE$ on $\LL(\HH)$ is a linear operator 
on the Hilbert space $\mathcal{L(H)}$. 
Thus the trace of $\mathcal{E}\in\LL\paren{\LL(\HH)}$  
is defined as 
\begin{align}
 \trhs\mathcal{E}:=\sum_{i} \braket{V_i,\mathcal{E}(V_i)}_{\mathrm{HS}},
\end{align}
where 
$\{V_i\}_i$ is an orthonormal basis of 
the Hilbert space $\mathcal{L(H)}$.
For example, when $\dim\HH=2$, 
the set of Pauli operators $\{\sigma_\mu/\sqrt2\}_{\mu=0}^3$, 
with $\sigma_0$ being the identity operator, 
is an orthonormal basis of
$\LL(\HH)$ so that 
the trace of $\EE\in\LL\paren{\LL(\HH)}$ can be written as 
\begin{align}
 \trhs\mathcal{E}
 =
 \f12
 \sum_{\mu=0}^3
 \tr\brak{\sigma_\mu\mathcal{E}(\sigma_\mu)}. 
 \label{eq-trhs-2d}
\end{align}

\section{The setup} \label{sec:qucon}

In this section, 
we shall present the main problem and its mathematical
formulation. 

We consider the {\em \ante{}-\post{} quantum control scheme} 
defined by the following sequence of processes 
(depicted in Fig.~\ref{fig:antepost}): 
\begin{enumerate}
 \item ``state preparation''
\\
An unknown state $\ket{\psi}\in\mathcal{H}$ is prepared. 
 \item ``\antec{}''
\\
A measurement is performed, 
which is described by a CP instrument
$\{\II_\om\}_{\om\in\{1,\dots, M\}}$, where $M$ is a positive
integer. 
 \item ``noise''
\\
The state undergoes an undesired evolution, called  ``noise,''
described by a TPCP map $\mathcal{N}$. 
 \item ``\postc{}''
\\
An operation, 
which depends on the
measurement outcome $\om$ of the \ante{} control, 
is performed on the system. 
This is described by a family of TPCP maps,
$\{\CC_\om\}_{\om\in\{1,\dots,M\}}$, which we call the \postc{}. 
\end{enumerate}
For given noise $\NN$, an 
\ante{}-\post{} control protocol is specified by 
the family $\{(\II_\om,\CC_\om)\}_{\om\in\{1,\dots,M\}}$. 
We assume throughout the paper that
the prepared state is pure, 
though one can consider more general mixed 
state preparation. 
We also assume that the state $\ket\psi$ above 
is 
{\em completely unknown}, i.e., 
the probability distribution is uniform on the unit sphere in
$\mathcal{H}$. 

The problem that we want to consider is, 
for given noise $\NN$, 
to find an {\em optimal}\/ scheme 
such that 
the states after the measurement with outcome $\om$ 
are as similar to the original state $\ketbra\psi$ as
possible.  
For defining the optimality, it is natural to introduce some
evaluation function $h$
and take an average with respect to the probability of obtaining $\om$, 
and then take an average with respect to $\ket\psi$ which is
completely unknown,

\begin{widetext}
\begin{align}
  &
  \int_{\|\ket{\psi}\|=1} d\psi
  \sum_{\substack{\text{measurement}\\ \text{outcomes} } } 
  \text{evaluation function}
  (\text{final state, initial state})
  \times
  \text{ probability}
  \nn
  =&
  \int_{\|\ket{\psi}\|=1} d\psi
  \sum_\om
  h\paren{\CC_\om\circ \NN \paren{\f{\II_\om(\kb\psi)}{\Tr\II_\om(\kb\psi)} }, 
    \ketbra{\psi}} 
  \Tr\II_\om(\kb{\psi}), 
\end{align}
\end{widetext}
where 
the integral is over all
unit vectors in $\mathcal{H}$ with the uniform measure normalized by
$\int_{\norm{ \ket{\psi} }=1} d\psi=1$. 
We choose the function $h$ to be 
the fidelity $F$ in Eq.~\eqref{eq-def-fidelity}. 
An advantage of the choice is 
that 
the resulting total evaluation function, the {\it average fidelity} 
\begin{align}
 \bar{F}=\int_{\|\ket{\psi}\|=1} d\psi
 \braket{\psi|\EE(\ketbra\psi)|\psi},
 \label{eq:avefid1} 
\end{align}
depends on the protocol $\brac{(\II_\om,\CC_\om)}_\om$ only through 
the {\em average operation}\/ 
\begin{align}
\EE:=
 \sum_{\om=1}^M\CC_\om\circ\mathcal{N}\circ\II_\om 
 \label{eq-averaged-op}
\end{align}
which is a TPCP map. 

We close the section by presenting a 
useful formula for the average fidelity. 
\begin{lem}
The average fidelity  $\bar F$ is given by 
\begin{align}
 \bar{F}=
 \frac{1}{d(d+1)}
 \left(d+\trhs\EE\right), 
\label{eq:avefid2} 
\end{align}
where 
$\EE$ is 
the average operation 
\eqref{eq-averaged-op} 
of the protocol $\brac{(\II_\om,\CC_\om)}_\om$ and 
$d:=\dim \mathcal{H}$. 
\label{lem:useful}
\end{lem}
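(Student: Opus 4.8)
The plan is to reduce Lemma~\ref{lem:useful} to the evaluation of the single Haar integral $\int_{\|\ket\psi\|=1} d\psi\,\braket{\psi|\EE(\kb\psi)|\psi}$, which is all that $\bar F$ in \eqref{eq:avefid1} amounts to, since (as already noted before the lemma) it depends on the protocol only through the average operation $\EE$. The one ingredient I need is the standard second-moment identity for a completely unknown pure state,
\begin{align}
 \int_{\|\ket\psi\|=1} d\psi\,\paren{\kb\psi}^{\ot 2}=\f{1+S}{d(d+1)},
 \label{eq-plan-sym}
\end{align}
where $S$ is the swap operator on $\HH\ot\HH$ and $(1+S)/2$ is the projector onto the symmetric subspace. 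One proves \eqref{eq-plan-sym} by noting that the left-hand side commutes with $U\ot U$ for every unitary $U$, so by Schur's lemma it is a linear combination of $1$ and $S$; being supported on the symmetric subspace it is proportional to $(1+S)/2$, and the constant is fixed by taking the trace (the symmetric subspace has dimension $d(d+1)/2$).

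Next I would write $\EE$ in Kraus form, $\EE=\sum_k\Ad_{A_k}$, so that trace preservation reads $\sum_k A_k^\dagger A_k=1$. A direct computation gives $\braket{\psi|\EE(\kb\psi)|\psi}=\sum_k|\braket{\psi|A_k|\psi}|^2$, and writing $|\braket{\psi|A|\psi}|^2=\Tr\brak{(\kb\psi)^{\ot2}\,(A\ot A^\dagger)}$, inserting \eqref{eq-plan-sym} and using $\Tr\brak{S\,(A\ot B)}=\Tr(AB)$ yields
\begin{align}
 \int_{\|\ket\psi\|=1} d\psi\,|\braket{\psi|A|\psi}|^2=\f{|\Tr A|^2+\Tr(A^\dagger A)}{d(d+1)}.
 \label{eq-plan-single}
\end{align}
Summing \eqref{eq-plan-single} over $k$, the term $\sum_k\Tr(A_k^\dagger A_k)=\Tr 1=d$ by trace preservation, while $\sum_k|\Tr A_k|^2=\trhs\EE$; the latter follows from the completeness relation $\sum_i V_i X V_i^\dagger=(\Tr X)\,1$, valid for any orthonormal basis $\brac{V_i}$ of $\LL(\HH)$, which applied to $\trhs\Ad_A=\sum_i\Tr(V_i^\dagger A V_i A^\dagger)$ gives $\trhs\Ad_A=|\Tr A|^2$, and then linearity of $\trhs$ finishes it. Collecting the two terms gives exactly $\bar F=(d+\trhs\EE)/\bigparen{d(d+1)}$.

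An equivalent, Kraus-free route is to write $\braket{\psi|\EE(\kb\psi)|\psi}=\Tr\brak{\paren{\EE\ot\id}\!\paren{(\kb\psi)^{\ot2}}\,S}$, average with \eqref{eq-plan-sym}, and simplify $\Tr\brak{\paren{\EE\ot\id}(1+S)\,S}=\Tr\EE(1)+\trhs\EE=d+\trhs\EE$; this also makes transparent that the two summands in $1+S$ are precisely what produce the $d$ and the $\trhs\EE$. The whole argument is routine; the only points needing care are the swap-operator trace identities, the basis-independent recognition of $\sum_k|\Tr A_k|^2$ as $\trhs\EE$ via the completeness relation, and having the twirling identity \eqref{eq-plan-sym} on hand.
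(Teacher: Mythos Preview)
Your proof is correct and essentially the same as the paper's: both rest on the twirling identity \eqref{eq-plan-sym} (obtained via Schur's lemma) and the swap-operator trace identity $\Tr\bigbrak{S(A\ot B)}=\Tr(AB)$. Your ``Kraus-free route'' is precisely the paper's argument, and your primary Kraus-operator computation is just a repackaging of the same calculation, with the identification $\sum_k|\Tr A_k|^2=\trhs\EE$ playing the role of the paper's identity $\tr\bigbrak{S(\id\ot\EE)(S)}=\trhs\EE$.
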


\begin{proof}
Let $S$ be the 
swap operator, 
$S(\ket{\psi}\ket{\phi}):=\ket{\phi}\ket{\psi}$, 
or 
in an orthonormal basis $\brac{\ket{ij}}_{i,j}$, 
$S=\sum_{ij}\ket{ij}\bra{ji}$. 
One can easily see that 
\begin{align}
  \tr\left[AB\right]=\tr\left[S(A\otimes B) \right]. 
  \label{eq-trabs}
\end{align}
It follows from 
\eqref{eq:avefid1} 
and 
\eqref{eq-trabs} 
that 
\begin{align}
  \bar F
=&
\tr\left[ 
  S(\id \otimes \mathcal{E}) (Q)
   \right], 
 \label{eq-barf-intermediate}
\end{align}
where 
$
Q:=\int_{\|\ket{\psi}\|=1} d\psi
\ketbra{\psi} \otimes
\ketbra{\psi}
$. 
Then the result is easily obtained by the following formulas, 
\begin{align}
  &Q=\f{1+S}{d(d+1)}, 
  \label{eq-symmetrizer}
  \\
  &\tr\left[S(\id \otimes \mathcal{E})(S)\right]
  =
  \trhs \EE. 
  \label{eq-trhse}
\end{align}
Indeed, from 
\eqref{eq-trabs}, 
\eqref{eq-barf-intermediate}, 
\eqref{eq-symmetrizer} and 
\eqref{eq-trhse}, one has 
\begin{align}
\paren{d(d+1)}\bar F&=
\tr\left[S(\id\otimes \mathcal{E})(1) \right]
+\tr\left[S(\id \otimes \mathcal{E})(S)\right]
\nn
&
=\tr\EE(1)+\trhs\mathcal{E}
=d+\trhs\mathcal{E}. 
\end{align}
Let us show \eqref{eq-symmetrizer} and 
\eqref{eq-trhse}. 
Eq.~\eqref{eq-symmetrizer} is seen~\cite{popescu05} by 
noting that 
$Q$ 
commutes with $U\otimes U$ for any $U$ in $\mathrm{SU}(d)$, 
the special unitary group on $\C^d$.
By Schur's lemma, 
$Q$ acts 
as scalar operators on the 
symmetric and
antisymmetric subspaces of 
$\HH\otimes\HH$, 
which are the spaces of irreducible representations. 
The scalar factors are found easily. 
Eq.~\eqref{eq-trhse} is seen by direct calculation, 
LHS = $
\sum_{ijkl} \bra{ij}
\paren{\ket k\!\bra l\otimes\EE
  \paren{\ket l\!\bra k}}\ket{ji}
=
\sum_{ij} \bra{j}
\EE
\paren{\ket j\!\bra i}
\ket{i}
$ = RHS. 
This completes the proof.  
\end{proof}

\section{\postc{}} \label{sec:poscon}

In this section, we shall consider the noise suppression by 
\postc{} only (Fig.~\ref{fig:post}) and present 
our first main result. 

The control sequence has no branches and the protocol is determined by 
a TPCP map $\mathcal{C}$ describing the \post{} control. 
Thus our aim is to find the optimal \postc{} $\CC$. 
The average fidelity  to be maximized is 
\begin{align}
 \bar{F}=
 \frac{1}{d(d+1)}
\left(d+\trhs\CC\circ\NN\right),
\end{align}
from Lemma~\ref{lem:useful}.

\begin{figure}[t]
 \centering
 \includegraphics[width=8.5cm]{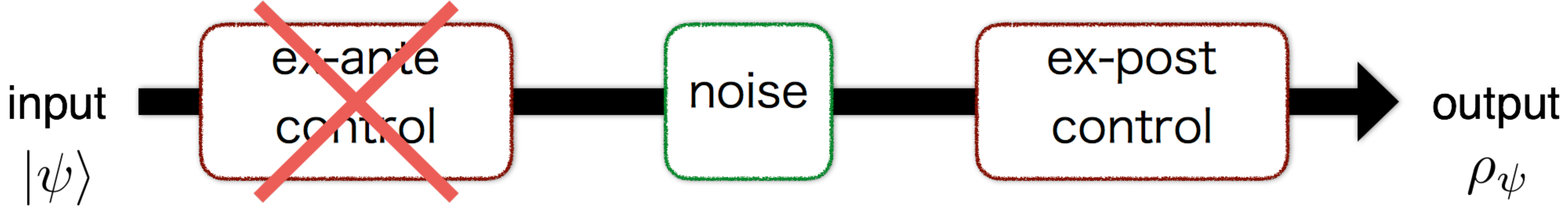}
\caption{The schematic diagram of the quantum control (only \postc{}).}
\label{fig:post}
\end{figure}

\begin{theo}
 For any noise $\NN$ in the two-dimensional Hilbert space $\HH$,
 the optimal \post{} control protocol 
 is a unitary transformation. 
\label{theo:kekka1}
\end{theo}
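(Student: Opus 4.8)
The plan is to recast the optimization in the Bloch (affine) picture, bring the noise to a canonical diagonal form by exploiting rotational covariance, and then close the argument with the complete-positivity constraint on a qubit channel.

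First, by Lemma~\ref{lem:useful} with $d=2$, maximizing $\bar F$ over post-control protocols is equivalent to maximizing $\trhs(\CC\circ\NN)$ over TPCP maps $\CC$. I would encode each TPCP map $\Phi$ on the qubit by its real $4\times4$ matrix in the Pauli basis $\{\sigma_\mu\}_{\mu=0}^{3}$; trace preservation forces the first row to be $(1,0,0,0)$, so $\Phi$ is determined by a pair $(\vec v_\Phi,T_\Phi)$ with $\vec v_\Phi\in\R^{3}$ and $T_\Phi$ a real $3\times3$ matrix, acting on Bloch vectors by $\vec r\mapsto T_\Phi\vec r+\vec v_\Phi$. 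A short computation from Eq.~\eqref{eq-trhs-2d} gives $\trhs\Phi=1+\tr T_\Phi$, and since the matrix of a composition is the product of the matrices, $\trhs(\CC\circ\NN)=1+\tr(T_\CC T_\NN)$. The key observation is that this quantity depends on $\CC$ only through the linear part $T_\CC$, never through the translation $\vec v_\CC$.

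Second, I would remove the particular form of the noise by a ``rotational'' singular value decomposition $T_\NN=R_1 D R_2^{\mathrm T}$, with $R_1,R_2\in\mathrm{SO}(3)$ and $D=\diag(d_1,d_2,d_3)$, $d_1\ge d_2\ge\lvert d_3\rvert\ge0$ (the reflection in an ordinary SVD can be pushed into the sign of the smallest singular value, so $\operatorname{sign}(d_1d_2d_3)=\operatorname{sign}(\det T_\NN)$). Since $\mathrm{SU}(2)\to\mathrm{SO}(3)$ is onto, each $R_i$ is the rotation induced by a qubit unitary $U_i$, i.e.\ $\Ad_{U_i}$ has linear part $R_i$; then $\CC\mapsto\ti\CC:=\Ad_{U_2^{\dag}}\circ\CC\circ\Ad_{U_1}$ is a bijection of the set of TPCP maps that carries unitary channels to unitary channels and, by cyclicity of the trace, satisfies $\tr(T_\CC T_\NN)=\tr(T_{\ti\CC}D)$. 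Hence it suffices to show that $\max_{\CC}\tr(T_\CC D)=d_1+d_2+d_3$ is attained at $\CC=\id$; reversing the bijection, the optimal post-control is then the unitary channel $\Ad_{U_2 U_1^{\dag}}$.

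Third, I would write $\tr(T_\CC D)=\sum_{i=1}^{3}d_i (T_\CC)_{ii}$ and constrain the diagonal of $T_\CC$ using complete positivity: pairing the Choi operator of $\CC$ with the four Bell states makes $\vec v_\CC$ and the off-diagonal entries of $T_\CC$ drop out and produces precisely the four inequalities $1+\sum_i\eta_i (T_\CC)_{ii}\ge0$ for the sign vectors $\eta\in\{\pm1\}^{3}$ of product $+1$, so $\bigl((T_\CC)_{11},(T_\CC)_{22},(T_\CC)_{33}\bigr)$ lies in the tetrahedron $\mathcal{T}=\operatorname{conv}\{(1,1,1),(1,-1,-1),(-1,1,-1),(-1,-1,1)\}$. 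A linear functional over a polytope attains its maximum at a vertex, and the ordering $d_1\ge d_2\ge\lvert d_3\rvert$ makes $(1,1,1)$ the maximizing vertex regardless of the sign of $d_3$ (it beats the other three since $d_2+d_3\ge0$, $d_1+d_3\ge0$, $d_1+d_2\ge0$). Therefore $\tr(T_\CC D)\le d_1+d_2+d_3=\tr D$, with equality when $T_\CC$ is the identity matrix, i.e.\ $\CC=\id$, which would complete the argument.

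The step I expect to be the real obstacle is the third, and only for noise with $\det T_\NN<0$. When $\det T_\NN\ge0$ complete positivity is not needed at all: $\tr(T_\CC T_\NN)\le\norm{T_\CC}_{\mathrm{op}}\,\norm{T_\NN}_{1}\le\norm{T_\NN}_{1}$ follows from the mere contractivity $\norm{T_\CC}_{\mathrm{op}}\le1$ of any trace-preserving positive map, and it is saturated by the unitary channel with rotation $R_2 R_1^{\mathrm T}\in\mathrm{SO}(3)$. For $\det T_\NN<0$ this argument fails: the naive optimum $\norm{T_\NN}_{1}$ would require a $T_\CC$ that is a reflection, such as $\diag(1,1,-1)$, which has operator norm $1$ but is not completely positive, so one genuinely needs the tetrahedron constraint --- equivalently, the Ruskai--Szarek--Werner classification of the extreme points of the set of qubit channels --- to rule it out, and more generally to see that no TPCP map, unital or not, can beat the best unitary. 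So the crux is that complete positivity pins $\operatorname{diag}(T_\CC)$ inside the positive-product tetrahedron even for non-unital $\CC$.
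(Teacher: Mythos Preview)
Your argument is correct, and in its decisive step it is actually more careful than the paper's. Both proofs share the same architecture: pass to the Bloch picture, reduce the problem to maximizing $\tr(T_{\CC}T_{\NN})$, and use input/output unitaries to bring $T_{\NN}$ to diagonal form $D=\diag(d_1,d_2,d_3)$. The paper then bounds each diagonal entry $(T_{\tilde\CC})_{jj}\le 1$ from mere positivity (its Eq.~\eqref{eq-pos}) and concludes $\sum_j d_j(T_{\tilde\CC})_{jj}\le\sum_j d_j$. That last inference tacitly assumes every $d_j\ge 0$; with $\mathrm{SO}(3)$ diagonalization this is equivalent to $\det T_{\NN}\ge 0$, and there \emph{are} qubit TPCP maps with $\det T_{\NN}<0$ (for instance the Pauli channel with Bloch diagonal $(0.2,-0.4,0.4)$, i.e.\ $p_0=p_1=0.3$, $p_2=0$, $p_3=0.4$). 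For such noise the paper's termwise bound does not give an upper bound at all. Your route differs precisely here: instead of bounding each $(T_{\tilde\CC})_{jj}$ separately, you use the Choi/Bell-state inequalities to confine the whole diagonal of $T_{\tilde\CC}$ to the tetrahedron $\operatorname{conv}\{(1,1,1),(1,-1,-1),(-1,1,-1),(-1,-1,1)\}$, a constraint that holds for \emph{all} TPCP maps, unital or not. The ordering $d_1\ge d_2\ge|d_3|$ then forces the maximizing vertex to be $(1,1,1)$ regardless of the sign of $d_3$, which is exactly what rules out the reflection $\diag(1,1,-1)$ and closes the $\det T_{\NN}<0$ case you flagged. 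In short: same skeleton, but your tetrahedron argument replaces---and repairs---the paper's termwise positivity bound.
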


\begin{proof}
In the two-dimensional Hilbert space $\HH$, 
we can express a general TPCP map $\EE$ in the basis
$\brac{\s_\mu}_{0\le \mu\le3}$ such that 
\begin{align}
{\mathcal{E}}(1)&=1+\sum_{k=1}^3 t_k\sigma_k, \q
{\mathcal{E}}(\sigma_j)=\sum_{j,k=1}^3s_{jk}\sigma_k. 
\end{align}
A necessary condition for positivity and trace-preserving property is 
$\EE(1\pm\s_j)\ge0$, or 
\begin{align}
  \sum_{k=1}^3|t_k\pm s_{jk}|^2\le1.   
  \label{eq-pos}
\end{align}
In the two-dimensional Hilbert space, 
one can transform the noise $\NN$ to $\ti\NN$ 
by unitary operations for the input and 
output states 
such that $\ti \NN$ has 
diagonal $s_{jk}$, namely, one has 
\begin{align}
 \mathcal{N}=\Ad_U\circ\, \tilde{\mathcal{N}}\circ\Ad_V 
\end{align}
with some unitary operators 
$U$ and $V$ on $\mathcal{H}$, 
where $\AA$ is defined in \eqref{eq-sand}. 
Introducing $\tilde\CC:=\Ad_{V}\circ\,\mathcal{C}\circ\Ad_{U}$, 
one has 
\begin{align}
 \trhs\mathcal{C}\circ\mathcal{N}
&=\trhs\tilde\CC\circ\tilde{\mathcal{N}} 
=
\f12
\sum_{\m=0}^3
\tr\sigma_\m\tilde\CC\circ\tilde{\mathcal{N}}(\sigma_\m)
\notag\\ 
&=1+\frac{1}{4}\sum_{j=1}^3
\tr{\sigma_j\tilde\NN(\sigma_j)}\, 
\tr{\sigma_j\tilde\CC(\sigma_j)},
\end{align}
where we have used \eqref{eq-trhs-2d} and trace
preserving property of $\tilde\CC$. 
Because $\tilde\CC$ is TPCP, \eqref{eq-pos} implies 
$\tr\sigma_j\tilde\CC(\sigma_j)\le2$ for $j=1,2,3$, so that 
\begin{align}
  \trhs\mathcal{C}\circ\mathcal{N} \le 1
  +
  \f12
  \sum_{j=1}^3
  \tr{\sigma_j\tilde\NN(\sigma_j)}. 
\end{align}
The equality holds only when 
$\tr\sigma_j\tilde\CC(\sigma_j)=2$, so that, 
by \eqref{eq-pos} again, 
one has $\tilde\CC=\id$. 
Thus one has 
 $\mathcal{C}=\Ad_{V}^{-1}\circ\Ad_{U}^{-1}$,
 which is a unitary transformation.
Then the maximum average fidelity $\bar{F}$ is 
$(6+\sum_j\tr{\sigma_j\tilde\NN(\sigma_j)})/12$. 
\end{proof}

Intuitively, this result says that essentially 
no \postc{} can suppress the effect of noise
if we are completely ignorant of 
the initial state.
On the other hand, when we have some knowledge of the initial state then we can suppress the effect of noise by quantum control~\cite{bramen07,mengil08}.
Instead of restricting the candidates for initial state, 
we consider \antec{} to extract some information
of the initial state, 
which is discussed in the next section.

From the proof, we find that 
the quantity 
$(1/2)\sum_j\tr{\sigma_j\tilde\NN(\sigma_j)}$ 
characterizes to what extent the noise 
is reversible
and allows a geometrical interpretation. 
In the Bloch sphere representation of the state, 
the whole sphere is mapped to an ellipsoid. 
Each $(1/2)\tr{\sigma_j\tilde\NN(\sigma_j)}$ is the contraction rate along
a principal axis.

In closing this section, we remark that 
an \ante{}-\post{} control with a single branch 
is equivalent to 
an \post{} control. 
In fact, the average fidelity can be written as 
\begin{align}
 \bar{F}=\frac{1}{d(d+1)}
 \Bigl(d+\trhs\II\circ\mathcal{C}\circ\mathcal{N}\Bigr) 
\end{align}
by Lemma~\ref{lem:useful} and the cyclic property of the trace.
Here, $\II,\mathcal{C}$ are TPCP maps corresponding to \antec{} and
\postc{}, respectively, but the same fidelity can be achieved 
by an \postc{} $\II\circ\mathcal{C}$.

\section{\ante{}-\postc{}} \label{sec:anposcon}

In this section, we discuss the noise suppression by \ante{}-\postc{} 
(Fig.~\ref{fig:antepost}) on a qubit. 
We shall consider the case of two branches 
because a POVM consisting of two projections
gives the optimal discrimination between 
completely unknown states of a qubit~\cite{MasPop95}.

In general, there is a trade-off between 
the information gained and 
the disturbance caused by the ex-ante control.
The \antec{} can extract some information on the initial state
which may be useful for noise suppression. 
At the same time, the measurement disturbs the state.
Thus one might expect that a protocol 
with a ``soft'' \ante{} measurement  could be optimal. 
It turns out, however, that this is not the case.

\begin{theo}
 Let the noise $\NN$ be 
 the depolarizing noise 
 $\mathcal{D}_\varepsilon$ 
 defined in
 \eqref{depo-noise}. 
 For the two-dimensional Hilbert space $\HH$, 
 the optimal \ante{}-\postc{} protocol $\brac{(\II_\om,\CC_\om)}_{\om=1,2}$ 
 is given as follows.

   (i) When the noise is weak, 
   $\ep\le2/3$, 
   the {\dn} protocol
   is optimal, 
   which is given by
   \begin{align}
     \II_\om\propto \id,\quad \CC_\om=\id, \quad\om=1,2. 
   \end{align}
   The optimal average fidelity is 
   $\bar F_{\mathrm{DN}}=1-\ep/2$. 

   (ii) When the noise is strong, 
   $\ep\ge2/3$, 
   the {\dr} protocol is optimal, 
   which is given by
   \begin{align}
     \II_\om(\rho)&=\ketbra{\phi_\om}\rho\ketbra{\phi_\om},
     \\
     \mathcal{C}_\om(\rho)&=\ket{\phi_\om}\bra{\phi_\om}{\tr\r},
   \end{align}
   where $\brac{\ket{\phi_\om}}_{\om=1,2}$ 
   is an arbitrary orthonormal basis of $\mathcal{H}$. 
   The optimal average fidelity is  
   $\bar F_{\mathrm{DR}}=2/3$. 
 \label{theo:kekka2}
\end{theo}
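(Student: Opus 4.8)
The plan is to reduce the problem, via Lemma~\ref{lem:useful}, to maximizing $\trhs\EE$ over two-branch protocols, where $\EE=\sum_{\om=1,2}\CC_\om\circ\DD_\ep\circ\II_\om$ is the average operation, and then to exploit the splitting $\DD_\ep=(1-\ep)\,\id+\ep\,\mathcal{T}$ with $\mathcal{T}(\r):=\tfrac1d(\tr\r)1$ the completely depolarizing map. This gives $\trhs\EE=(1-\ep)\,\trhs\EE_0+\ep\,\trhs\EE_1$ for the two TPCP maps $\EE_0:=\sum_\om\CC_\om\circ\II_\om$ and $\EE_1:=\sum_\om\CC_\om\circ\mathcal{T}\circ\II_\om$. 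Denoting by $\{M_\om\}$ the POVM associated with the instrument ($\tr\II_\om(\r)=\tr(\r M_\om)$), a direct computation gives $\EE_1(\r)=\sum_\om\tr(\r M_\om)\,\CC_\om(1)/d$, so $\EE_1$ is a two-outcome measure-and-prepare channel, with $\trhs\EE_1=\tfrac1d\sum_\om\tr[M_\om\CC_\om(1)]$.

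The argument then rests on three inequalities, for $d=2$. (I)~$\trhs\EE_0\le d^2=4$, with equality iff $\EE_0=\id$: this is simply $\bar F\le1$ for an arbitrary channel. (II)~$\trhs\EE_1\le d=2$, from $\tr[M_\om\CC_\om(1)]\le\|\CC_\om(1)\|_\infty\tr M_\om\le d\,\tr M_\om$ together with $\sum_\om\tr M_\om=\tr 1=d$; at equality each $\CC_\om(1)$ is forced to be rank one, $2\ketbra{\phi_\om}$ — hence $\CC_\om(\r)=(\tr\r)\ketbra{\phi_\om}$ — and $M_\om=\ketbra{\phi_\om}$ with $\{\phi_\om\}$ an orthonormal basis; this is the familiar classical bound $\bar F\le2/(d+1)$ for measure-and-prepare schemes. (III)~$\trhs\EE_0+2\,\trhs\EE_1\le d(d+1)=6$: here one notes that $\tfrac13(\EE_0+2\EE_1)=\sum_\om\CC_\om\circ\DD_{2/3}\circ\II_\om$ is the average operation one would get for noise $\DD_{2/3}$, and that $\DD_{2/3}$ is entanglement-breaking — the qubit depolarizing channel $\DD_\ep$ is entanglement-breaking precisely for $\ep\ge2/3$, e.g.\ $\DD_{2/3}(\r)=2\int\!du\,\braket{u|\r|u}\,\ketbra u$ — so $\tfrac13(\EE_0+2\EE_1)$ is again measure-and-prepare and (II) applies to it, giving $\trhs\bigl(\tfrac13(\EE_0+2\EE_1)\bigr)\le2$. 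This is the conceptual core: the threshold $\ep=2/3$ is the entanglement-breaking threshold, and (III) is its $\ep=2/3$ incarnation.

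With (I)--(III) in hand, the maximization over $(\trhs\EE_0,\trhs\EE_1)$ is an elementary linear program. For $\ep\le2/3$, combining (III) and then (I),
\[
\trhs\EE=(1-\ep)\trhs\EE_0+\tfrac\ep2(2\trhs\EE_1)\le\bigl(1-\tfrac{3\ep}2\bigr)\trhs\EE_0+3\ep\le 4\bigl(1-\tfrac{3\ep}2\bigr)+3\ep=4-3\ep,
\]
hence $\bar F\le1-\ep/2$. For $\ep\ge2/3$, combining (III) and then (II),
\[
\trhs\EE\le(1-\ep)(6-2\trhs\EE_1)+\ep\,\trhs\EE_1=6(1-\ep)+(3\ep-2)\trhs\EE_1\le2,
\]
hence $\bar F\le2/3$. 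These bounds are attained: the {\dn} protocol gives $\EE=\DD_\ep$, so $\trhs\EE=\trhs\DD_\ep=4-3\ep$ and $\bar F=1-\ep/2$; the {\dr} protocol gives the pinching channel $\EE(\r)=\sum_\om\braket{\phi_\om|\r|\phi_\om}\ketbra{\phi_\om}$, so $\trhs\EE=2$ and $\bar F=2/3$. For the uniqueness implicit in the statement, for $\ep<2/3$ optimality forces equality in (I), i.e.\ $\EE_0=\id$; following this back, each Kraus operator of $\EE_0$ — a product of a Kraus operator of some $\CC_\om$ with one of $\II_\om$ — must be a scalar multiple of $1$, which forces $M_\om\propto1$ and each $\CC_\om$ unitary, so $\EE=\DD_\ep$, the {\dn} operation; for $\ep>2/3$ optimality forces equality in (II), which as noted pins down the {\dr} form.

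The step I expect to be the real obstacle is spotting and establishing (III) in this shape — recognizing that $\ep=2/3$ is precisely the entanglement-breaking threshold of the qubit depolarizing channel, and that this is exactly what couples the two candidate protocols. After that, the linear-programming combination and the achievability checks are routine, and the only remaining delicacy is the equality analysis (in particular the claim that $\EE_0=\id$ can only arise from unitary branches with $M_\om\propto1$).
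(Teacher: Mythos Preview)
Your argument is correct and takes a genuinely different route from the paper. The paper's proof is a long explicit optimization tied to the $d=2$ structure theory: it first uses the commutation lemma $\DD_\ep\circ\KK=\tilde\KK\circ\DD_\ep$ (Lemma~\ref{lem:koukan}) to reduce to a simple CP instrument, then restricts each $\tilde\CC_\om$ to extreme points of the qubit TPCP maps via the Ruskai--Szarek--Werner parametrization, and finally bounds the resulting expression through a chain of Cauchy--Schwarz inequalities, the Birkhoff--von~Neumann theorem, and a joint-concavity argument in auxiliary variables. Your approach bypasses all of this machinery: the convex split $\DD_\ep=(1-\ep)\,\id+\ep\,\mathcal T$ turns the optimization into a two-variable linear program with three constraints, of which the decisive one, (III), encodes exactly that $\DD_{2/3}$ sits on the entanglement-breaking threshold of the qubit depolarizing channel. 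This is both shorter and more conceptual --- it explains \emph{why} the transition occurs at $\ep=2/3$. More strikingly, nothing in your argument uses either $d=2$ or the two-branch restriction: in dimension $d$ the channel $\DD_\ep$ is entanglement-breaking precisely for $\ep\ge d/(d+1)$, and the same three inequalities ($\trhs\EE_0\le d^2$, $\trhs\EE_1\le d$, $\trhs\EE_0+d\,\trhs\EE_1\le d(d+1)$) and linear program go through verbatim to give $\bar F\le 1-(d-1)\ep/d$ for $\ep\le d/(d+1)$ and $\bar F\le 2/(d+1)$ otherwise, attained by the {\dn} and {\dr} protocols respectively. In other words, your proof of Theorem~\ref{theo:kekka2} already establishes Conjecture~\ref{conj:kekka3} in full --- something the paper's method cannot do, since it leans on the explicit extreme-point description of qubit channels.
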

We give the proof in the next section. 

The {\dn}  protocol literally does nothing, and merely lets the
system undergo the noise. 
The value of the average fidelity 
$\bar F_{DN}$ 
in Theorem~\ref{theo:kekka2}
can be obtained by direct calculation. 
Namely, one substitutes $\EE=\DD_\ep$ into the definition
\eqref{eq:avefid1} of $\bar F$ and has 
\begin{align}
 \bar F_{\textrm{DN}} = \int_{\|\ket{\psi}\|=1}
\biggparen{
(1-\ep) \braket{\psi|\psi}\!\braket{\psi|\psi}
+ \ep\f 1 2
 \braket{\psi|\psi}
}
=1-\frac{\ep}{2}.
\label{eq-attain-dn}
\end{align}

The {\dr} protocol 
means that one measures and discriminates between a certain two 
orthogonal states 
before the noise process and reprepares the 
corresponding state after the noise process. 
The value of the average fidelity 
$\bar F_{\mathrm{DR}}$
in Theorem~\ref{theo:kekka2}
can be calculated as follows.
Without loss of generality,  
one can choose $\ket{\phi_\om}=\ket{\om}$, 
$\om=0,1$, 
where 
$\s_3\ket{0}=\ket0$ and 
$\s_3\ket{1}=-\ket1$. 
Substituting $\braket{\om|\s_3|\om}=(-1)^\om$ and
$\braket{\om|\s_1|\om}=\braket{\om|\s_2|\om}=0$ 
into \eqref{eq-trhs-2d},
one obtains $\trhs\EE=2$ and hence 
\begin{align}
 \bar F_{\mathrm{DR}}=\f 1 6 \biggparen{2+\trhs\EE}=\f 2 3
\label{eq-attain-dr}
\end{align}
from Lemma~\ref{lem:useful}. 
It is known from the study of imperfect cloning~\cite{MasPop95,chiribella10}
that any \dr{} protocol
with arbitrary number of branches 
does not give a larger value.

If the noise is infinitesimally weak, it is natural that 
doing nothing is better than the other protocols. 
If the noise is so strong that the state is completely destroyed after
the noise process, 
one should obtain information on the initial
state as much as possible before the system goes through the noise process. 
The result implies that there is no intermediate regime 
where the
optimal protocol involves weak measurements. 
The {\dr}  protocol only uses the classical information
extracted by the \ante{} measurement while the {\dn}  protocol perform no
quantum measurement or operation. 
These protocols are rather classically motivated and are not ``truly
quantum,'' in that they do not reflect any trade-off relation between
information gain and disturbance. 
It is remarkable that these classical control protocols are better
than any other quantum control protocols. 
The result shows that we cannot suppress the noise even if we can
perform \antec{}. 
This may be understood as fundamental limitations in quantum mechanics.

\section{Proof of Theorem~\ref{theo:kekka2}}
\label{pfkekka2}
In this section, we give a proof of Theorem~\ref{theo:kekka2}, 
after showing two lemmas. 
\begin{lem}
 Let $\HH$ be two-dimensional. 
 For any TPCP map $\mathcal{E}$, there exists a TPCP map
 $\tilde{\mathcal{E}}$ that satisfies 
\begin{align}
 \mathcal{D}_\ep\circ\mathcal{E}=\tilde{\mathcal{E}}\circ\mathcal{D}_\ep.
\end{align}
\label{lem:koukan}
\end{lem}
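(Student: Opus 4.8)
The plan is to write down $\tilde\EE$ explicitly and then check three things: the intertwining identity $\DD_\ep\circ\EE=\tilde\EE\circ\DD_\ep$, trace preservation, and complete positivity; only the last presents any difficulty.

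First I would decompose $\DD_\ep=(1-\ep)\,\id+\ep\,\DD_1$, where $\DD_1(X):=\tfrac12(\tr X)\,1$ is the $\ep=1$ case of \eqref{depo-noise}, and observe that $\DD_1\circ\EE=\DD_1$ since $\EE$ is trace-preserving. This motivates the choice
\[
  \tilde\EE:=(1-\ep)\,\EE+\ep\,\tilde\EE_1,\qquad
  \tilde\EE_1(X):=\EE(X)+\tfrac12(\tr X)\bigl(1-\EE(1)\bigr),
\]
which for $\ep<1$ is nothing but $\DD_\ep\circ\EE\circ\DD_\ep^{-1}$ but which is well defined for all $\ep\in[0,1]$. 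Now $\tilde\EE_1$ is trace-preserving (because $\tr(1-\EE(1))=0$) and unital (because $\tilde\EE_1(1)=1$), hence so is $\tilde\EE$, and moreover $\tilde\EE_1\circ\DD_1=\DD_1$; a short linearity computation using these facts yields $\tilde\EE\circ\DD_\ep=\DD_\ep\circ\EE$. Since $\tilde\EE$ is a convex combination of $\EE$ (completely positive by hypothesis) and $\tilde\EE_1$, complete positivity of $\tilde\EE$ reduces to that of $\tilde\EE_1$, and this is where the real content lies.

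The key observation is that $\tilde\EE_1$ is the ``unital part'' of $\EE$: it agrees with $\EE$ on the traceless Pauli operators (the correction term is proportional to $\tr X$) and sends $1\mapsto1$, so in the notation of the proof of Theorem~\ref{theo:kekka1} it has the same matrix $(s_{jk})$ as $\EE$ but vanishing $t_k$. I would then bring $(s_{jk})$ to diagonal form $\diag(\lambda_1,\lambda_2,\lambda_3)$ with $\lambda_j\in\R$, by conjugating $\EE$ with fixed unitaries on the input and on the output (singular value decomposition together with the surjectivity of $\mathrm{SU}(2)\to\mathrm{SO}(3)$); this preserves complete positivity of both $\EE$ and $\tilde\EE_1$ and commutes with the construction, so it costs nothing. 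In this form the Choi operator of $\tilde\EE_1$ is block diagonal with the $2\times2$ blocks $\tfrac12\!\begin{pmatrix}1+\lambda_3 & \lambda_1+\lambda_2\\ \lambda_1+\lambda_2 & 1+\lambda_3\end{pmatrix}$ and $\tfrac12\!\begin{pmatrix}1-\lambda_3 & \lambda_1-\lambda_2\\ \lambda_1-\lambda_2 & 1-\lambda_3\end{pmatrix}$, so $\tilde\EE_1$ is CP precisely when $|\lambda_1\pm\lambda_2|\le 1\pm\lambda_3$. These two inequalities are exactly what positivity of the Choi operator of $\EE$ delivers: two of its $2\times2$ principal submatrices have off-diagonal entries $\tfrac12(\lambda_1+\lambda_2)$ and $\tfrac12(\lambda_1-\lambda_2)$, with diagonal entries $\tfrac12(1\pm\lambda_3\pm t_3)$, and their positivity forces $1\pm\lambda_3\ge|\lambda_1\pm\lambda_2|$ (the surviving displacement component $t_3$ only sharpens the bound). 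That completes the argument.

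I expect the main obstacle to be exactly this last step --- proving $\tilde\EE_1$ is completely positive, i.e.\ that the unital part of a qubit channel is again a channel. It is here that two-dimensionality is used essentially, through the explicit Fujiwara--Algoet-type structure of the qubit Choi operator; the statement should not be expected verbatim for $d>2$. Everything else is routine bookkeeping.
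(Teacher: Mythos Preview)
Your proof is correct, and the map $\tilde\EE$ you construct is exactly the one the paper uses: in the Bloch-matrix picture both have the same $3\times3$ part $E$ as $\EE$ and displacement $(1-\ep)\mathbf t$ in place of $\mathbf t$. The difference lies in how complete positivity is verified. The paper simply observes that replacing $\mathbf t$ by $(1-\ep)\mathbf t$ keeps all of the Ruskai--Szarek--Werner inequalities \eqref{eq-pos-abs}, \eqref{eq:cptp1}--\eqref{eq:cptp3} intact, quoting that theorem as a black box. You instead write $\tilde\EE=(1-\ep)\EE+\ep\,\tilde\EE_1$ as a convex combination and reduce everything to showing that the \emph{unital part} $\tilde\EE_1$ of a qubit channel is again a channel, which you establish directly from two $2\times2$ principal minors of the Choi matrix of $\EE$. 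This is more self-contained---it recovers the Fujiwara--Algoet tetrahedron conditions without invoking the full Ruskai--Szarek--Werner characterization---and it makes transparent \emph{why} shrinking $\mathbf t$ preserves CP (convexity), something the paper's monotonicity check of the inequalities leaves implicit. Your remark that this step is where two-dimensionality enters is exactly right: in higher dimensions the unital part of a channel need not be CP, so neither argument extends verbatim.
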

\begin{proof}
  The cases $\ep=0,1$ are trivial so that we assume $\ep\ne0,1$. 
  Then there exists $\DD_\ep\inv$ 
  which is actually 
  $\DD_{-\ep/(1-\ep)}$. 
  One therefore has
  \begin{align}
    \tilde{\EE}=\DD_\varepsilon\circ\EE\circ
    \DD_{-\ep/(1-\ep)}. 
  \end{align}
  We prove that the map 
  $\ti\EE$ is CP, 
  though 
  $\DD_{-\ep/(1-\ep)}$ is not. 
  Any Hermiticity-preserving linear map 
  $\sum_{\m=0}^3 x^\m\sigma_\m
  \mapsto\sum_{\m=0}^3 x'{}^\m\sigma_\m$
  is specified by a linear map 
  from $\R^4$ to $\R^4$, 
  $(x^\m)\mapsto (x'{}^\m)$, $\m=0,1,2,3$. 
  From a theorem by Ruskai {\it et al.} (see the Appendix), the TPCP map
  $\EE$ is
  expressed as 
  \begin{align}
   \begin{pmatrix}
      x'{}^0 \\
      \mathbf{x}'
    \end{pmatrix}
    =
    \begin{pmatrix}
      1 & 0 \\
      \mathbf{t}
      &
      E
    \end{pmatrix}
    \begin{pmatrix}
      x^0 \\
      \mathbf{x}
    \end{pmatrix}, 
\end{align}
where 
$\mathbf{x}=(x^1,x^2,x^3)^T$, 
$\mathbf{t}\in\mathbb{R}^3$, $E$ is a $3\times3$ matrix, 
and $\mathbf{t}$ and the (signed) singular values $d_i$ of $E$ satisfy 
\eqref{eq-pos-abs}, 
\eqref{eq:cptp1}, \eqref{eq:cptp2} and \eqref{eq:cptp3}. 
In the same way, $\ti\EE$ is expressed as 
\begin{align}
   \begin{pmatrix}
      x'{}^0 \\
      \mathbf{x}'
    \end{pmatrix}
    =
    \begin{pmatrix}
      1 & 0 \\
      (1-\ep)\mathbf{t}
      &
      E
    \end{pmatrix}
    \begin{pmatrix}
      x^0 \\
      \mathbf{x}
    \end{pmatrix}.
\end{align}
Since $0<\ep<1$, 
the components of the matrix above also satisfy the conditions 
\eqref{eq-pos-abs}, 
\eqref{eq:cptp1}, \eqref{eq:cptp2} and \eqref{eq:cptp3}. 
Thus $\ti\EE$ is TPCP. 
\end{proof}

\begin{lem}
  Let $A$, $\ti R$, $D$ and $R$ be real square matrices. 
  If $A$ and $D$ are diagonal and 
  $\tilde R$
  is orthogonal, the following inequality holds, 
  \begin{align}
    \Tr\brak{A R D\tilde R}\le 
    \sum_j |d_j| \biggparen{ \sum_i a_i^2r_{ij}^2 }^{1/2}, 
  \end{align}
  where 
  $a_j$ and $d_j$ are the diagonal elements of $A$ and
  $D$, respectively, and 
  $r_{ij}$ is the $(i,j)$ element of $R$. 
  \label{lem-schwa}
\end{lem}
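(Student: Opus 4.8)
The plan is to write everything in components, use the diagonality of $A$ and $D$ to collapse two of the four index sums in the trace, and then finish with a column-by-column Cauchy--Schwarz estimate in which the only property of $\tilde R$ that enters is that its rows are unit vectors.

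Write $A=\diag(a_i)$, $D=\diag(d_j)$, $R=(r_{ij})$ and $\tilde R=(\tilde r_{ij})$. First I would compute the diagonal entries of the product: since $A_{ik}=a_i\delta_{ik}$ and $D_{kl}=d_k\delta_{kl}$, the matrix product telescopes to $(ARD\tilde R)_{ii}=a_i\sum_j r_{ij}\,d_j\,\tilde r_{ji}$, and hence
\begin{align}
  \Tr\brak{A R D\tilde R}=\sum_j d_j\sum_i a_i\, r_{ij}\,\tilde r_{ji}.
\end{align}
Second, for each fixed $j$ I would apply the Cauchy--Schwarz inequality to the inner sum over $i$, regarding $(a_i r_{ij})_i$ and $(\tilde r_{ji})_i$ as vectors indexed by $i$:
\begin{align}
  \Bigl|\sum_i a_i\, r_{ij}\,\tilde r_{ji}\Bigr|
  \le\biggparen{\sum_i a_i^2 r_{ij}^2}^{1/2}\biggparen{\sum_i \tilde r_{ji}^2}^{1/2}.
\end{align}
Because $\tilde R$ is orthogonal, its $j$-th row has unit norm, so $\sum_i\tilde r_{ji}^2=1$ and the second factor is simply $1$. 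Combining this with the trivial bound $\sum_j d_j(\cdots)\le\sum_j|d_j|\,|\!\cdots\!|$ gives exactly the asserted inequality.

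There is essentially no obstacle here: the argument is a two-line computation followed by Cauchy--Schwarz. The only point deserving a moment's attention is the bookkeeping — one must check that, after the diagonal matrices are eliminated, the surviving summation index $i$ is the row index of both $A$ and $R$ (so that $a_i^2 r_{ij}^2$ appears on the right-hand side), while $j$ labels the columns of $R$ and the rows of $\tilde R$, which is precisely why orthogonality of $\tilde R$ is what makes the second Cauchy--Schwarz factor disappear. Note that nothing is assumed about $R$ beyond its being a real matrix, and no sign or ordering hypotheses on the $a_i$ or $d_j$ are needed.
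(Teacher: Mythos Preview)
Your proof is correct and follows essentially the same route as the paper: expand the trace in components, view the inner $i$-sum as an inner product of $(a_i r_{ij})_i$ with a vector built from the $j$-th row of $\tilde R$, apply Cauchy--Schwarz, and use orthogonality of $\tilde R$ to make the second factor equal to $1$. The only cosmetic difference is that the paper absorbs the sign of $d_j$ into the second vector (writing $d_j\tilde r_{ji}/|d_j|$) before applying Cauchy--Schwarz, whereas you apply the triangle inequality $\sum_j d_j(\cdots)\le\sum_j|d_j|\,|\cdots|$ afterward; your ordering is arguably cleaner since it avoids any worry about $d_j=0$.
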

\begin{proof}
Let $\ti r_{ij}$ be the $(i,j)$ 
element of $\tilde R$. 
One has 
\begin{align}
  \Tr\brak{ARD\ti R}
  =\sum_{ij}a_i r_{ij}d_j\ti r_{ji}
  =\sum_{j}|d_j|\sum_i a_i r_{ij}
  \frac{d_j\tilde r_{ji}}{|d_j|}. 
\end{align}
One can view the $i$-sum as an inner product of vectors 
$(a_ir_{ij})_i$ and 
$(d_j\ti r_{ji}/|d_j|)_i$. 
Applying the Cauchy-Schwarz inequality to the inner product, 
and using the orthogonality of $\ti R$, 
one is able to show the claim. 
\end{proof}

\begin{proof}[Proof of Theorem \ref{theo:kekka2}]
The proof consists of five parts. 
The main idea is to show that $\bar F$ does not exceed the values
attained by the {\dn} and {\dr} protocols. 

{\em Step 1. Reduction to simple CP instrument.}
By Lemma~\ref{lem:useful}, 
the optimal protocol 
$\{(\II_\om,\CC_\om)\}_{\om=1,2}$ 
is the maximizer of 
$\sum_{\om=1}^2f_\om$, with 
\begin{align}
f_\om:=\trhs\CC_\om\circ\mathcal{D}_\varepsilon\circ\II_\om. 
\label{eq-fom}
\end{align}
As was explained in \eqref{lem:hayashi}, 
the CP instrument 
$\brac{\II_\om}$ 
is specified by 
a family $\brac{\mathcal{K}_\om}$ of TPCP maps and a POVM $\brac{M_\om}$ 
so that 
$\II_\om
=
\mathcal{K}_\om\circ \Ad_{\sqrt{M_\om}}$. 
By Lemma~\ref{lem:koukan}, 
there is a TPCP map $\tilde{\mathcal{K}}_\om$ which satisfies
$\mathcal{D}_\varepsilon\circ\mathcal{K}_\om
=\tilde{\mathcal{K}}_\om\circ\mathcal{D}_\varepsilon$. 
From these, $f_\om$ can be written as 
\begin{align} 
f_\om 
&=
\trhs\CC_\om
\circ
\mathcal{D}_\varepsilon
\circ
\mathcal{K}_\om 
\circ 
\Ad_{\sqrt{M_\om}}
\nn
&=
\trhs
\CC_\om
\circ
\tilde{\mathcal{K}}_\om 
\circ
\mathcal{D}_\varepsilon
\circ
\Ad_{\sqrt{M_\om}}
\nn
&=
\trhs
\tilde\CC_\om 
\circ\mathcal{D}_\varepsilon\circ\Ad_{\sqrt{M_\om}}, 
\label{eq-fom-abstract}
\end{align}
where 
$\tilde\CC_\om:=\CC_\om\circ\tilde{\mathcal{K}}_\om$. 
Thus, 
the original optimization problem for 
the protocol 
$\brac{(\CC_\om,\II_\om)}$ 
is
translated to that for a protocol 
$\bigbrac{ 
  ( 
  \tilde\CC_\om,
  \Ad_{\sqrt{M_\om}}
  )
}$ 
specified by 
a family 
of TPCP maps 
$\bigbrac{\tilde{\mathcal{C}}_\om}$ 
and a POVM 
$\brac{M_\om}$. 

Let us choose a basis $\brac{\ket0,\ket1}$ 
in which 
$M_1$ and $M_2=1-M_1$ are diagonal, so that 
one has 
\begin{align}
  &M_\om=\a_\om\ketbra{0}+\b_\om\ketbra{1}, 
  \\
  &\a_\om,\b_\om\ge0, 
  \q
  \sum_\om \a_\om=
  \sum_\om \b_\om=1. 
\end{align}
From \eqref{eq-trhs-2d} and \eqref{eq-fom}, one has 
$f_\om
=
\f12\paren{
  \tr M_\om
  +
  I_\om
}$ 
so that 
\begin{align}
  f_1+f_2
  =
  1+\frac{1}{2}(I_1+I_2), 
\end{align}
where 
\begin{align}
&I_\om
:=
\sum_{j=1}^3\tr \brak{
  \s_j
  \tilde\CC_\om 
  \circ\mathcal{D}_\varepsilon\circ\Ad_{\sqrt{M_\om}}(\s_j)
  }
  \nn
  =&\,
  (1-\ep)
  \sqrt{\a_\om\b_\om}\, 
  \sum_{j=1}^2\tr \brak{
  \s_j
  \tilde\CC_\om (\s_j)
  }
  \nn
  +
  &\, 
  \f{\a_\om-\b_\om}2
  \tr\brak{
    \s_3
    \tilde\CC_\om (1)
  }
  +
  (1-\ep)
  \f{\a_\om+\b_\om}2
  \tr\brak{
    \s_3
    \tilde\CC_\om (\s_3)
  }. 
\end{align}

{\em Step 2. 
Necessity for each
$\tilde{\mathcal{C}}_\om$ to be extreme.}
From here to the end of Step 4, 
we fix the POVM $\brac{M_\om}$ and vary the TPCP maps 
${\tilde \CC_\om}$ to obtain a bound for each $I_\om$. 
At this stage one can treat each $I_\om$ independently. 
We drop the subscript $\om$ from the variables and parameters till 
Step 4. 

Since 
$f$ is a linear functional of $\ti\CC$, 
we observe that 
the optimal $\tilde{\mathcal{C}}$ must be one of the extreme 
points in the convex space of TPCP maps. 
From a theorem by Ruskai {\it et al.}\/ (see the Appendix), 
such a TPCP map can be written in the form 
\eqref{eq-map-matrix} with the condition 
\eqref{eq-extreme-TPCP}. 
Thus one has 
\begin{align}
  I&= 
  (\a-\b)
  d_0
  r_{33}
  +(1-\ep)
  \tr\left[
    A
    R
    D
    \ti R
  \right], 
  \label{eq-Iom-2}
\end{align}
where $R$ and $\tilde R$ are real $3\times3$ rotation matrices, 
$r_{33}$ is the $(3,3)$ element of $R$, 
$A:=\diag[
2\sqrt{\a\b}, 
2\sqrt{\a\b}, 
\a+\b]$, 
$D:=
\diag[d_1, 
d_2, 
d_1d_2]
$, 
$d_0^2:=(1-d_1^2)(1-d_2^2)$ and $d_1, d_2\in[-1,1]$. 

{\em Step 3. An $\ti R$-independent upper bound of $I$.} 
We shall derive an upper bound $J$ of $I$, which is independent
from $\ti R$. 
From Lemma~\ref{lem-schwa}, one has 
\begin{align}
  \Tr\brak{A R D\tilde R}\le 
  \sum_j |d_j| \biggparen{ \sum_i a_i^2 r_{ij}^2}^{1/2}, 
  \label{eq-schwa}
\end{align}
where 
$a_i$ and $d_i$ are diagonal elements of $A$ and
$D$, respectively, and 
$r_{ij}$ is the $(i,j)$ component of $R$. 

Next,  it follows 
from 
orthogonality of $R$ 
that 
the matrix with $(i,j)$ element being 
$r_{ij}^2$
is doubly stochastic. From 
the Birkhoff-von Neumann theorem (see the Appendix), 
such a matrix must be a convex combination of permutation matrices. 
Furthermore, because we are considering the case 
$a_1=a_2$ $(=2\sqrt{\a\b})$ 
in the maximization of $I$, 
it is enough to consider the case 
\begin{align}
  &
  (r_{ij}^2)_{0\le i,j\le3}
  =
  \nn
  &
  (1-2p)
  {\small
    \begin{pmatrix}
    1&&\\
    &1&\\
    &&1
  \end{pmatrix}
  }
  +
  (p+q)
  {\small\begin{pmatrix}
    &&1\\
    &1&\\
    1&&
  \end{pmatrix}
}
  +
  (p-q)
  {\small
    \begin{pmatrix}
    1&&\\
    &&1\\
    &1&
  \end{pmatrix}
}, 
  \label{eq-perm}
\end{align}
where $0\le q\le p\le 1/2$. 
From 
\eqref{eq-Iom-2}, 
\eqref{eq-schwa} and \eqref{eq-perm}, 
one has 
\begin{align}
&I\le J
:= 
\left|(\a-\b)d_0\right|
\sqrt{1-2p}\notag\\
&+(1-\ep)\biggl[
|d_1|\sqrt{a_1^2+(a_3^2-a_1^2)(p+q)}
\notag\\
&
+|d_2| \sqrt{a_1^2+(a_3^2-a_1^2)(p-q)}
+|d_1d_2| \sqrt{a_3^2-2(a_3^2-a_1^2)p}
\biggr]. 
\label{eq-Itilde}
\end{align}
The bound $J$ is a function of $(p,q,d_1,d_2)$ while $a_1$ and
$a_3$ are parameters.

{\em Step 4. Joint concavity and a $\ti \CC$-independent bound.} 
Let 
$x:=(d_1^2+d_2^2)/2$, $y:=(d_1^2-d_2^2)/2$. 
Then $J(p,q,x,y)$ 
is 
jointly concave with respect to variables $q$ and $y$, 
which can be easily seen by direct calculation of 
the Hessian of each 
term of $J$. 
Furthermore, 
$J$ is 
invariant 
under 
$(q,y)\mapsto (-q,-y)$. 
Thus one has 
\begin{align}
 &J(p,q,x,y)
 =
 \frac{1}{2}\left(J(p,q,x,y)+J(p,-q,x,-y)\right)\\
&\le J(p,0,x,0)
\nn
&=
|\a-\b|(1-x) 
\sqrt{1-2p}\notag\\
&\qq
+(1-\ep)\Bigl[
2\sqrt x\sqrt{4\a\b+(\a-\b)^2p}
\nn
&
\qq \qq \qq \qq
+x \sqrt{(\a+\b)^2-2(\a-\b)^2p}
\Bigr],
\label{eq-bound-of-J}
\end{align}
where we have substituted the expressions of $a_1$, $a_2$ and $d_0$. 
Because $\a,\b\in[0,1]$ and $p\in[0,1/2]$, 
one can show 
by a simple observation 
that the bound \eqref{eq-bound-of-J} does not exceed 
\begin{align}
(1-x) 
\sqrt{r}
+2(1-\ep)
\sqrt x\sqrt{(\a+\b)^2-r}
+(1-\ep)(\a+\b)x, 
\label{eq-boundtmp}
\end{align}
where $r:=(\a-\b)^2(1-p)\in[0,1]$. 
Applying the Cauchy-Schwarz inequality to 
the first two terms of 
\eqref{eq-boundtmp}, 
one sees that 
\eqref{eq-boundtmp} does not exceed 
\begin{align}
  (\a+\b)\paren{
    \sqrt{(1-x)^2+4(1-\ep)^2 x}+(1-\ep) x
  }. 
\end{align}
This is a convex function of $x$ 
hence reaches the maximum at the
boundary $x=0,1$. 
Thus, from \eqref{eq-Itilde}, 
we arrive at a $\ti\CC$-invariant upper bound, 
\begin{align}
  I\le (\a+\b)\max\brac{3(1-\ep),1}. 
  \label{eq-ineq-J}
\end{align}

{\em Step 5. A protocol-independent bound for $\bar F$ and its attainability.} 
We revive the subscript $\om$. 
Because $\sum_\om \a_\om=\sum_\om \b_\om=1$, 
one immediately obtains 
a 
protocol-independent 
upper bound 
from 
\eqref{eq-ineq-J}, 
$I_1+I_2\le 2\max\brac{3(1-\ep),1}$. This is equivalent to 
\begin{align}
  \bar F\le \f16\paren{3+\max\brac{3(1-\ep),1}}. 
\end{align}
The values 
$\bar F=1-\ep/2$ 
and 
$\bar F=2/3$ 
are attained by {\dn} and {\dr} protocols, respectively, 
as was shown in 
\eqref{eq-attain-dn} and 
\eqref{eq-attain-dr}. 
\end{proof}

\section{Higher dimensions} \label{sec:numcon}

\begin{figure}[t]
 \centering
 \includegraphics[width=8.5cm]{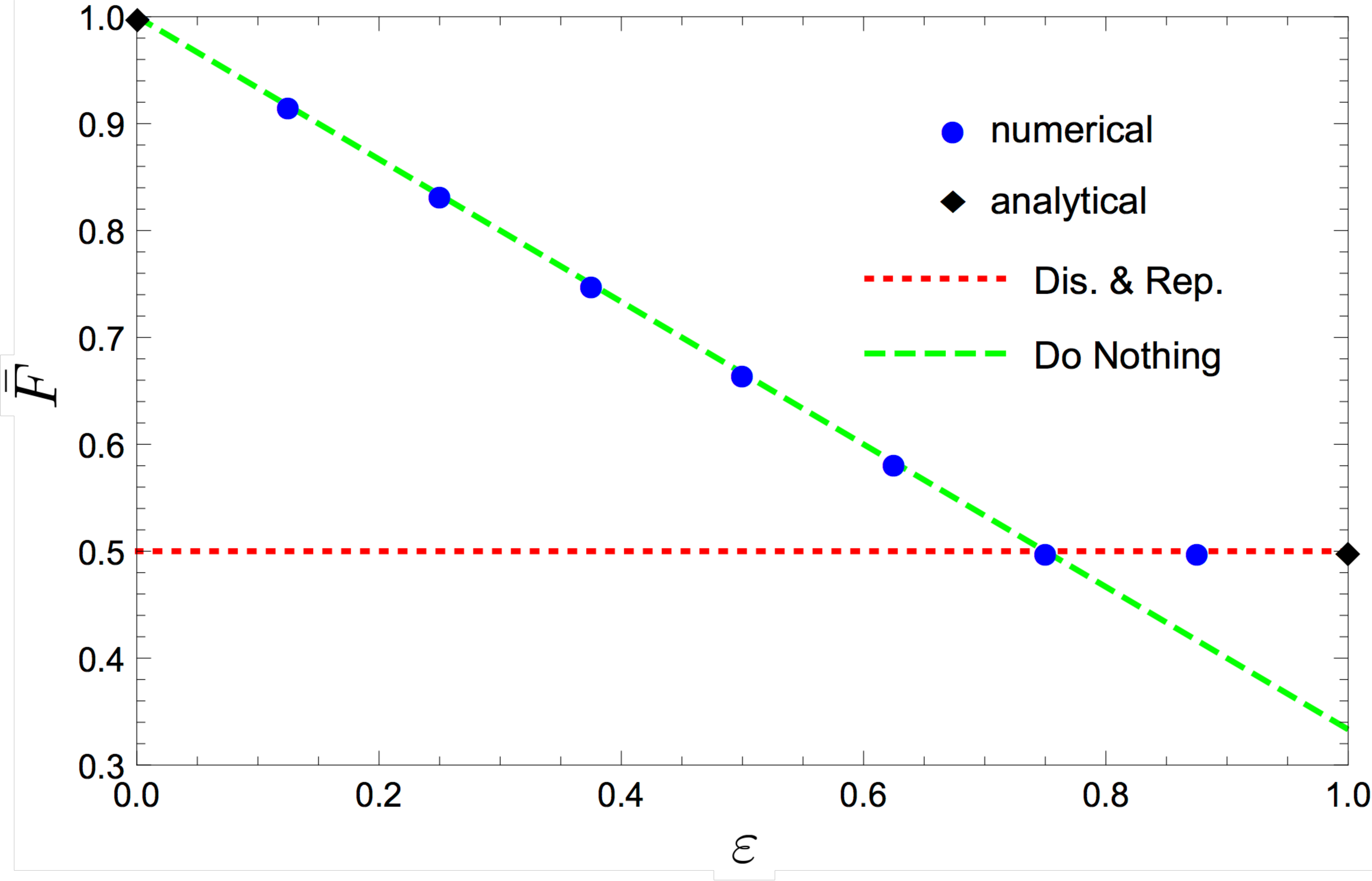}
 \caption{(Color online) 
   The maximum average fidelity $\bar F$ achieved by 
   \ante-\postc{} for the depolarizing noise 
   of strength $\ep$, when $\dim\HH=3$. 
   The (blue) round points are the maxima obtained by numerical
   optimization, 
  the (black) diamond points are exact values for 
  $\ep=0,1$, 
  the (green) dashed line 
  is 
  $\bar F$ achieved by the {\dn} protocol, 
  and 
  the (red) dotted line is that obtained by the {\dr} 
  protocol. 
The results support Conjecture~\ref{conj:kekka3}.} 
\label{fig:numres}
\end{figure}

Though Theorem~\ref{theo:kekka2} is valid only for 
the two-dimensional Hilbert space $\HH$, 
similar results may hold in higher dimensions.
In this section, we propose a conjecture in general dimensionality and 
show a numerical evidence for the three-dimensional Hilbert space. 

\begin{conj}
 Let $\mathcal{D}_\varepsilon$ be the depolarizing noise 
 \eqref{depo-noise} in the $d$-dimensional Hilbert space $\HH$. 
 Then the optimal {\ante}-{\postc} 
 $\{(\II_\om,\CC_\om)\}_{\om=1}^d$
 for $\DD_\ep$ is given as follows.

(i) When the noise is weak, $\ep\le d/(d+1)$, 
   the {\dn} protocol is optimal, which is given by
\begin{align}
  \II_\om\propto \id,\quad \CC_\om=\id, \quad\om=1,\dots,d. 
\end{align}
The optimal average fidelity is 
$\bar F_{DN}=1-(d-1)\ep/d$. 

(ii) When the noise is strong, $\ep\ge d/(d+1)$, 
the {\dr}  protocol is optimal, which is given by
\begin{align}
 \II_\om(\rho)=&\ketbra{\phi_\om}\rho\ketbra{\phi_\om},\\ 
 \CC_\om(\rho)=&\ketbra{\phi_\om}\tr\r, \quad\quad\quad\quad\quad 
 \om=1,\dots,d,
\end{align}
where $\{\ket{\phi_\om}\}_{\om=1}^d$ is an arbitrary orthonormal basis of $\mathcal{H}$.
The optimal average fidelity is 
$\bar F_{DR}=2/(d+1)$. 
\label{conj:kekka3}
\end{conj}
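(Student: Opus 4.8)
\medskip
\noindent\emph{Remarks toward a proof of Conjecture~\ref{conj:kekka3}.}
The plan is to mimic the five-step architecture of the proof of Theorem~\ref{theo:kekka2}, upgrading each qubit-specific ingredient to its $d$-dimensional counterpart. By Lemma~\ref{lem:useful} the task is to maximize $\trhs\EE=\sum_{\om=1}^{d}\trhs\CC_\om\circ\DD_\ep\circ\II_\om$, and the bound to be established is
\begin{align}
  \trhs\EE\le\max\brac{\,1+(d^2-1)(1-\ep),\ d\,},
  \label{eq-conj-bound}
\end{align}
whose two branches are $\trhs\DD_\ep$ (attained by the {\dn} protocol) and $\trhs\Delta$ with $\Delta$ the completely dephasing channel in the basis $\brac{\ket{\phi_\om}}$ (attained by the {\dr} protocol), the two coinciding exactly at $\ep=d/(d+1)$.

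For Steps~1 and~2 the structure should survive with some work. The Hayashi decomposition \eqref{lem:hayashi} holds in every dimension, so $\II_\om=\KK_\om\circ\Ad_{\sqrt{M_\om}}$, and Lemma~\ref{lem:koukan} survives because $\DD_\ep\inv=\DD_{-\ep/(1-\ep)}$ still exists for $\ep\ne0,1$; here $\ti\KK_\om=\DD_\ep\circ\KK_\om\circ\DD_\ep\inv$ equals $\KK_\om-\ep\,\tr(\cdot)\bigbrak{\KK_\om(1/d)-1/d}$, and one must verify directly that $\ti\KK_\om$ is completely positive --- the role played by the Ruskai \emph{et al.}\ normal form in the qubit case --- presumably from the Choi operator of $\DD_\ep\circ\KK_\om\circ\DD_\ep\inv$ together with the trace-preservation constraints on $\KK_\om$. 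Absorbing $\ti\KK_\om$ into the post-control gives $f_\om=\trhs\ti\CC_\om\circ\DD_\ep\circ\Ad_{\sqrt{M_\om}}$, which is linear in the TPCP map $\ti\CC_\om$, so the optimal $\ti\CC_\om$ is extreme; the Ruskai \emph{et al.}\ parametrization of extreme qubit channels is then replaced by Choi's characterization (Kraus operators $\brac{A_k}$ with $\brac{A_k^\dag A_\ell}$ linearly independent, hence at most $d$ of them). Each $M_\om$ can be brought to diagonal form using the $\mathrm{U}(d)$-covariance of $\DD_\ep$ and the cyclicity of $\trhs$, but --- unlike the qubit case, where $M_2=1-M_1$ forces a common eigenbasis --- for $d\ge3$ the constraint $\sum_\om M_\om=1$ does not, so the eigenvalue data of the $M_\om$ remain coupled and the clean parametrization by independent scalars is lost.

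The real obstacle is Steps~3--4. In the qubit proof the rotation $\ti R$ was eliminated by the Cauchy--Schwarz bound of Lemma~\ref{lem-schwa}, the matrix $(r_{ij}^2)$ was reduced to a convex combination of permutations via Birkhoff--von Neumann, and the resulting function was shown jointly concave in cleverly chosen variables; none of these has an evident verbatim analogue for $d\ge3$, where a general channel has no Bloch-ellipsoid description. Two routes seem worth attempting: (a) twirl the whole protocol by the $\mathrm{U}(d)$-covariance of $\DD_\ep$ to collapse it onto a low-dimensional family and optimize there directly; or (b) bound $\trhs\paren{\ti\CC\circ\DD_\ep\circ\Ad_{\sqrt{M}}}$ by singular-value estimates for compositions of channels combined with the $\DD_\ep$-induced contraction of the traceless block. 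The difficulty either way is to make the estimate tight enough to separate exactly the {\dn} and {\dr} values and not merely something weaker.

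The final step (attainability) is routine: substituting $\EE=\DD_\ep$ and $\EE=\Delta$ into Lemma~\ref{lem:useful} yields $\bar F_{DN}=1-(d-1)\ep/d$ and $\bar F_{DR}=2/(d+1)$ by the same direct computation as in \eqref{eq-attain-dn}--\eqref{eq-attain-dr}, and the optimality of the {\dr} value among discriminate-and-reprepare protocols with any number of branches is the known imperfect-cloning bound~\cite{MasPop95,chiribella10}. Thus the entire content of the conjecture lies in the upper bound \eqref{eq-conj-bound}, i.e.\ in pushing Steps~2--4 through without the two-dimensional Bloch-sphere crutch; the numerics of Fig.~\ref{fig:numres} for $d=3$ are consistent with it.
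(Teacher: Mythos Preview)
The statement is labeled a conjecture, and the paper offers no proof: immediately after stating it the authors remark that ``the proof of Theorem~\ref{theo:kekka2} does not work in the same way because we used a concrete characterization of the extreme points of TPCP maps when $\dim\HH=2$,'' and the only support given is the numerical optimization for $d=3$ in Fig.~\ref{fig:numres}. So there is no paper-proof to compare against; your framing as ``remarks toward a proof'' with explicitly acknowledged gaps at Steps~3--4 is consistent with the paper's own position.

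That said, there is a concrete error in your Step~1. You assert that Lemma~\ref{lem:koukan} ``survives'' in dimension $d$, with complete positivity of $\ti\KK=\DD_\ep\circ\KK\circ\DD_\ep\inv$ to be verified ``presumably from the Choi operator \ldots\ together with the trace-preservation constraints.'' In fact the lemma fails already for $d=3$. Take the TPCP map $\KK$ with Kraus operators $\ket{0}\!\bra{0}$, $\ket{0}\!\bra{1}$, $\ket{2}\!\bra{2}$, so that $\KK(\rho)=\bigparen{\bra{0}\rho\ket{0}+\bra{1}\rho\ket{1}}\ketbra{0}+\bra{2}\rho\ket{2}\,\ketbra{2}$. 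Then $\KK(1/3)-1/3=\tfrac13\ketbra{0}-\tfrac13\ketbra{1}$, and your own formula gives
\[
\ti\KK(\ketbra{2})=\ketbra{2}-\tfrac{\ep}{3}\ketbra{0}+\tfrac{\ep}{3}\ketbra{1},
\]
which has eigenvalue $-\ep/3<0$ for every $\ep\in(0,1]$; hence $\ti\KK$ is not even a positive map, let alone CP. Since for $\ep\ne0,1$ the intertwiner satisfying $\ti\KK\circ\DD_\ep=\DD_\ep\circ\KK$ is unique (it must agree with $\KK$ on the traceless part and send $1$ to $\DD_\ep(\KK(1))$), no alternative choice of $\ti\KK$ rescues the statement. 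Consequently the absorption of $\ti\KK_\om$ into the post-control, on which your reduction to a simple CP instrument rests, does not go through for $d\ge3$. This is an obstruction to the five-step plan that is independent of --- and logically prior to --- the extreme-point difficulty flagged by both you and the paper.
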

The proof of 
Theorem~\ref{theo:kekka2} does not work 
in the same way
because we used a concrete 
characterization of the extreme points of TPCP
maps when $\dim\HH=2$.

In our numerical calculation below,
we make use of Choi's correspondence (see the Appendix) 
which relates the CP maps  $\CC_\om, \II_\om\in\LL(\LL(\HH))$
with positive operators 
$C_\om, I_\om \in \LL(\HH\ot\HH)$, 
called the Choi operators, 
as in \eqref{eq:choi}.

\begin{lem}
The average fidelity for an \ante-\postc{} protocol 
$\brac{(\II_\om,\CC_\om)}_{\om=1}^d$ 
is given by 
\begin{align}
 \bar{F}=&
 \frac{1}{d(d+1)}
 \Bigparen{d+\sum_{\om=1}^d
 \tr\brak{R_\om (\id\otimes\mathcal{N})(S) }}, 
\end{align}
where 
\begin{align}
 R_\om:=&\tr_C\left[(1_A\otimes C_{\om,CB}^T) (I_{\om,AC}\otimes
 1_B) \right], 
\quad \om=1,\dots,d, 
\label{eq-num}
\end{align}
is the partial transpose of the
{\choio} of $\II_\om\circ\CC_\om$ 
[\,$\HH_A$, $\HH_B$ and $\HH_C$ are copies of $\HH$, 
$\Tr_C$ denotes partial trace on $\HH_C$, and $X_{AC}$ denotes 
an operator $X\in\LL(\HH_A\ot\HH_C)$, etc.]. 
\label{lem-fid-in-choi}
\end{lem}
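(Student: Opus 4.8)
The plan is to reduce everything to Lemma~\ref{lem:useful}, which gives $\bar F=\frac{1}{d(d+1)}\bigl(d+\trhs\EE\bigr)$ with $\EE=\sum_\om\CC_\om\circ\NN\circ\II_\om$, and then to rewrite $\trhs\EE=\sum_\om\trhs(\CC_\om\circ\NN\circ\II_\om)$ as a sum of Hilbert--Schmidt pairings on $\HH\ot\HH$ of an operator depending only on $\NN$ with one depending only on $\II_\om,\CC_\om$. First I would use the cyclic property of the Hilbert--Schmidt operator trace (already invoked at the end of Sec.~\ref{sec:poscon}) to collect the noise-independent maps, $\trhs(\CC_\om\circ\NN\circ\II_\om)=\trhs\bigl((\II_\om\circ\CC_\om)\circ\NN\bigr)$. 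It then suffices to prove, for an arbitrary CP map $\Phi$ on $\LL(\HH)$, the identity $\trhs(\Phi\circ\NN)=\tr\bigl[(\Phi\ot\id)(S)\,(\id\ot\NN)(S)\bigr]$, and afterwards to identify $(\Phi\ot\id)(S)$ for $\Phi=\II_\om\circ\CC_\om$ with the operator $R_\om$ of \eqref{eq-num}.

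For the displayed identity I would give a short Kraus computation. Writing $\NN=\sum_a\Ad_{N_a}$ and $\Phi=\sum_m\Ad_{M_m}$, one has $\Phi\circ\NN=\sum_{m,a}\Ad_{M_mN_a}$, and the elementary fact $\trhs\Ad_A=|\tr A|^2$ (immediate from \eqref{eq-trhse}) gives $\trhs(\Phi\circ\NN)=\sum_{m,a}|\tr(M_mN_a)|^2$. Using $\tr X\,\tr Y=\tr(X\ot Y)$, cyclicity of the trace, and $|\tr Z|^2=\tr Z\,\overline{\tr Z}=\tr Z\,\tr(Z^\dagger)$, this equals $\tr\bigl[\bigl(\sum_m M_m\ot M_m^\dagger\bigr)\bigl(\sum_a N_a\ot N_a^\dagger\bigr)\bigr]$. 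The key step is the operator identity $\sum_a N_a\ot N_a^\dagger=S\,(\id\ot\NN)(S)$, which follows from $(1\ot N_a)S=S(N_a\ot 1)$ and $S^2=1$, and likewise $\sum_m M_m\ot M_m^\dagger=S\,(\id\ot\Phi)(S)$; substituting these and using $S\,(\id\ot\Phi)(S)\,S=(\Phi\ot\id)(S)$ (conjugation by the swap interchanges the two tensor factors) gives the identity. Since $S$ is the partial transpose of the unnormalised maximally entangled state, $(\Phi\ot\id)(S)$ is exactly the partial transpose of the Choi operator of $\Phi$; in particular it does not depend on the chosen Kraus decomposition.

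It remains to identify $R_\om:=(\Phi_\om\ot\id)(S)$, with $\Phi_\om=\II_\om\circ\CC_\om$, with the expression in \eqref{eq-num}. This is the composition (``link product'') rule for Choi operators: the Choi operator of $\II_\om\circ\CC_\om$ is built from the Choi operators $I_{\om,AC}$ of $\II_\om$ and $C_{\om,CB}$ of $\CC_\om$ by tensoring with identities on the missing copies, transposing one of them on the shared copy $\HH_C$, and tracing $\HH_C$ out; carrying the partial transpose of the previous paragraph through this construction produces precisely $\tr_C\bigl[(1_A\ot C_{\om,CB}^T)(I_{\om,AC}\ot 1_B)\bigr]$. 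Summing over $\om$ and substituting into Lemma~\ref{lem:useful} yields the claimed formula.

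I expect the only genuine difficulty to be the notational bookkeeping: tracking, consistently with the Choi convention fixed in \eqref{eq:choi} and in the Appendix, which tensor factor carries the transpose at each step---so that the noise-dependent operator comes out as $(\id\ot\NN)(S)$ rather than $(\NN\ot\id)(S)$ or a transposed variant, and so that the link-product step reproduces \eqref{eq-num} verbatim, including the position of the transpose on $C_{\om,CB}$. All the underlying manipulations are routine, and the final formula can be checked against the {\dn} and {\dr} protocols as a consistency test.
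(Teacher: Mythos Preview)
Your proposal is correct and follows essentially the same route as the paper: reduce to Lemma~\ref{lem:useful}, use cyclicity of $\trhs$ to group $\II_\om\circ\CC_\om$, establish $\trhs(\Phi\circ\NN)=\tr\bigl[(\Phi\ot\id)(S)\,(\id\ot\NN)(S)\bigr]$, and then identify $(\Phi\ot\id)(S)$ with the partial-transposed Choi operator of $\Phi=\II_\om\circ\CC_\om$ via the link-product rule. The only difference is cosmetic: you prove the displayed swap identity by an explicit Kraus computation, whereas the paper obtains it from the representation-free relation $(\EE_1\ot\EE_2)^*(S)=(\EE_2\ot\EE_1)(S)$ together with \eqref{eq-trhse}.
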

\begin{proof}
  From Lemma~\ref{lem:useful}, it suffices to show 
  \begin{align}
    \trhs \CC_\om\circ \NN\circ \II_\om
    =
     \tr\brak{R_\om(\id\otimes\mathcal{N})(S) }. 
   \end{align}
   For any linear maps $\EE_1$ and $\EE_2$ on $\LL(\HH)$ 
   such that $\EE_1$ and $\EE_2^*$ commute, 
   one has 
   \begin{align}
     (\EE_1\ot \EE_2)^*(S)=(\EE_2\ot\EE_1)(S), 
     \label{eq-action-on-swap}
   \end{align}
   where an asterisk denotes the dual linear map.
   Eq.~\eqref{eq-action-on-swap} is seen by 
   \eqref{eq-trabs} because 
   both of 
   $\Tr \brak{ (A\ot B)(\EE_1\ot \EE_2)^*(S) }$
   and 
   $\Tr \brak{ (A\ot B)(\EE_2\ot \EE_1)(S) }$ 
   are equal to $\Tr\brak{ \EE_1(A)\EE_2(B) }$. 
   One has
   \begin{align}
     &\trhs \CC_\om\circ \NN\circ \II_\om
     =
     \trhs \II_\om\circ\CC_\om\circ \NN
     \nn
     &=
     \tr\brak{ S (\id\ot(\II_\om\circ\CC_\om\circ \NN))(S) }
     \nn
     &=
    \tr\brak{ R_\om (\id\ot \NN))(S) }, 
  \end{align}
  where $R_\om=((\II_\om\circ\CC_\om)\ot\id)(S)$ 
  is the partial transpose on $\HH_B$ of the {\choio} of $\II_\om\circ\CC_\om$
  and have used the cyclic property of the trace, 
  \eqref{eq-trhse} and 
  \eqref{eq-action-on-swap}. 
  Furthermore, one has 
  \begin{align}
    R_\om
    &=
    (\II_\om\ot\id)\paren{\brak{(\CC_\om\ot\id)(S^{T_B})}{}^{T_B}}
    \nn
    &
    =
    (\II_\om\ot\id)(C_{\om}{}^{T_B})
    \nn
    &=
    \Tr_C\brak{ (1_A\ot C_{\om,CB}{}^{T_CT_B})(I_{\om,AC}\ot\id_B) }, 
  \end{align}
  where $T_B, T_C$  denotes the partial transpose on $\HH_B, \HH_C$ and 
  we have used the fact that 
  $S^{T_B}$ is an unnormalized maximally entangled state
  $\kb\Psi$, $\ket\Psi=\sum_i\ket{ii}$
  and \eqref{eq:choi}. 
\end{proof}

By Lemma~\ref{lem-fid-in-choi}, 
the problem of finding optimal $\bar F$ and $\brac{(\II_\om,\CC_\om)}$ 
is recast in the following form: 
\begin{align}
&\text{maximize}\quad f:=\sum_{\om=1}^d
\tr\brak{R_\om (\id\otimes\DD_\ep)(S)} \notag\\
&\text{subject to} \quad C_\om, I_\om \ge0,\; P=0, 
\label{eq-maximization-problem}
\end{align}
where $R_\om$ is defined in \eqref{eq-num} and 
\begin{align}
 P:=\sum_\om \tr \Bigbrak{ \bigparen{\tr_AC_\om-1}^2 }
+\tr \Bigbrak{ \bigparen{\tr_A\sum_\om I_\om-1}^2 } 
\end{align}
is the ``penalty function.'' 
The conditions $C_\om, I_\om\ge0$ imply 
that complete positivity of $\CC_\om, \II_\om$ and 
the condition $P=0$ ensures the trace-preserving property of 
$\CC_\om$ and $\sum_\om\II_\om$ 
(see the Appendix). 
When $\ep=0,1$, the problem can be exactly solved for general
$d=\dim\HH$. 
For $\ep=0$, the {\dn} protocol obviously 
attains the maximum $\bar F=1$.
For $\ep=1$, it is easy to see that 
all the protocols fall into {\dr} protocols 
defined by POVMs.
The maximal average fidelity 
achieved by such protocols 
is $\bar F=2/(d+1)$~\cite{chiribella10}. 

One can solve the maximization problem~\eqref{eq-maximization-problem} 
in the following steps. 
\begin{enumerate}
\item 
Generate $2d$ lower triangular matrices 
$\brac{L_{C_\om},L_{I_\om}}_{1\le\om\le d}$ 
so that its nontrivial components are random numbers which obey 
uniform distribution in the interval 
$\bigbrak{-\sqrt{d},\sqrt{d}}$. 
          The last is a necessary condition for $P=0$.
\item 
Set the Choi operators as 
$C_\om=L_{C_\om}L_{C_\om}^\dag$ and $I_\om=L_{I_\om}L_{I_\om}^\dag$. 
Then $C_\om, I_\om \ge0$ automatically hold. 
\item 
Apply a numerical maximization method to 
$f-\lambda P$, where $\lambda$ is a (large) positive number. 
The penalty term $-\lambda P$ effectively ensures the  
condition $P=0$. 
\end{enumerate}

We examined Conjecture~\ref{conj:kekka3} when $\dim\HH=3$. 
We carried out the numerical scheme above 
for 5000 initial random points, 
with $\lambda=10^3$ and 
the maximization method being the simulated annealing. 
By randomness in the initial data and in the
optimization scheme, 
we expect that the global maximum of $\bar F$ are found. 
Fig.~\ref{fig:numres} 
shows the optimal average fidelity 
$\bar F$ as a function of the noise strength $\ep$. 
The results suggest 
that 
either 
the {\dn} or {\dr} protocol 
is optimal, 
depending on the strength of the noise. 
This provides evidence for Conjecture~\ref{conj:kekka3}.

\section{Conclusion and discussions} \label{sec:sum}

We have discussed the problem of 
protecting completely unknown
states against given noise by \ante{} and \postc{}
scheme.  
A protocol in the scheme is 
described mathematically by 
$\brac{(\II_\om,\CC_\om)}_{\om\in\Omega}$ 
where $\brac{ \II_\om }$ is 
the CP instrument with the set $\Omega$ of outcomes 
applied before the system goes through
the noise and 
$\CC_\om$ are 
the TPCP maps applied after 
the system suffered the noise. 
To evaluate the closeness of the input and output states, 
we have chosen the average fidelity $\bar F$ between the input and output states, 
which is linear in $\II_\om$ and
$\CC_\om$. 
We have shown in 
Theorem \ref{theo:kekka1}
that when the scheme involves \postc{} only, 
one  essentially cannot suppress any given noise. 
In other words, 
all 
one can do is to cancel the unitary rotational 
part of the noise, 
if one is completely ignorant of the input state. 
Next, 
we have considered \ante{}-\postc{} scheme, 
focusing on the depolarizing noise.
We have shown in 
Theorem~\ref{theo:kekka2} 
that the optimal average fidelity is achieved by 
protocols which are 
not ``truly quantum,'' or 
can be understood classically. 
Namely, if the noise is weak, 
the {\dn} protocol is optimal, which does nothing to the system
literally, and no other protocol can make a larger average fidelity. 
If the noise is strong, 
the {\dr} protocol is optimal.
In the protocol,
one completely measures the system
beforehand, discards the resulting state  
and reconstructs 
the state estimated from the measurement.
The theorems above are for the two-dimensional Hilbert space. 
Finally, we 
have proposed Conjecture~\ref{conj:kekka3} that Theorem~\ref{theo:kekka2} is
essentially true for any dimensionality, or more precisely, 
the optimal average fidelity is achieved either by the {\dn} or {\dr}
protocol. 
We have found numerical evidence to support the conjecture 
in the three-dimensional Hilbert space.

A natural question to ask is whether the result similar to
Theorem~\ref{theo:kekka2} holds or not for noise other than the
depolarizing noise.
This is not the case
at least for the amplitude damping noise,
e.g. spontaneous emission;
it is shown numerically that 
there exists a quantum protocol that can do better than the {\dn} and
{\dr} protocols~\cite{wang14}. 
Our result suggests that 
noise suppression for completely unknown input states 
is impossible 
except by using the bias or anisotropy of the noise itself 
even when one includes the {\antec} in the scheme. 
We note that the depolarizing noise is isotropic in the sense that 
it commutes with arbitrary unitary operations. 
Thus our result may be understood as describing 
the fundamental limitations of quantum 
mechanics from the viewpoint of noise suppression. 
It is worth pursuing which class of noise 
allows nontrivial suppression 
and identifying the optimal controls therein. 
In particular, 
it may be important to examine whether Theorem~\ref{theo:kekka2} 
can be extended 
to all unital noise, the noise that preserves the completely mixed state. 
Our numerical calculations 
suggest 
that it is true at least for the dephasing noise.


Let us discuss our results further 
in the fundamental aspect: irreversibility of quantum processes. 
While unitary operations describe reversible processes only, TPCP maps
include irreversible processes. 
Then 
it is natural to ask 
to what extent a
given TPCP map has irreversibility.
For an operator on a Hilbert space, 
the polar decomposition extracts its
``irreversible'' part uniquely. 
However, as far as we know, there is no such a simple and canonical
decomposition for TPCP maps, which are operators on Banach spaces.
This fact makes it difficult to define the irreversible part of a
given TPCP map.
In some sense, our work is an attempt to address this problem from the
viewpoint of control theory.
Using \ante{} and \post{}, we try to cancel an effect of a given TPCP
map (noise) and define operationally the irreversible part 
as what still remains.
In Theorem \ref{theo:kekka1}, we sought approximate left inverse of a given TPCP map and
find it to be a unitary operation. 
It is not trivial that the approximate left inverse is
unique and reversible, which is the conclusion of our theorem.
In Theorem \ref{theo:kekka2}, using both \ante{} and \postc{}, 
we found that
the irreversible part of the depolarizing noise is itself 
when the noise is weak.

\section*{Acknowledgments}

T.K acknowledges the support from MEXT-Supported Program for the Strategic Research Foundation at Private Universities “Topological Science” and from Keio University Creativity Initiative “Quantum Community.”

\appendix*
\section{Theorems used in the text}

In this Appendix, we shall quote some mathematical facts used in the
main text. 

A square matrix is said doubly stochastic if 
all components are nonnegative and if the sum over 
any row is unity and the sum over any column is unity. 
\begin{theorem*}[Birkoff-von Neumann~\cite{BvN}]
  Any doubly stochastic matrix 
  is 
  a convex combination
  of permutation matrices. 
\end{theorem*}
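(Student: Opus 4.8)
The plan is to prove the theorem by induction on the number of positive entries of the matrix, with Hall's marriage theorem as the combinatorial input. Let $A=(a_{ij})$ be an $n\times n$ doubly stochastic matrix and let $N$ be the number of pairs $(i,j)$ with $a_{ij}>0$. Since each row sums to unity it contains a positive entry, and likewise for columns, so $N\ge n$. The base case is $N=n$: then every row has exactly one positive entry, which, being alone in its row, must equal $1$; the same holds columnwise, so $A$ is a permutation matrix.

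For the inductive step I would take $N>n$ and first produce a permutation $\sigma$ of $\{1,\dots,n\}$ with $a_{i\sigma(i)}>0$ for every $i$. This comes from Hall's theorem applied to the bipartite graph on rows and columns whose edges are the positive positions of $A$: for any set $S$ of rows, letting $T$ be the set of columns carrying a positive entry in some row of $S$, one has $|S|=\sum_{i\in S}\sum_j a_{ij}=\sum_{i\in S}\sum_{j\in T}a_{ij}\le\sum_{i}\sum_{j\in T}a_{ij}=|T|$, so Hall's condition holds and the desired perfect matching exists. Let $P_\sigma$ be the associated permutation matrix and put $c:=\min_i a_{i\sigma(i)}$, which is positive; since $N>n$ one has $c<1$ (if $c=1$, each row of $A$ would already be a standard basis vector, forcing $N=n$). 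Then $A':=(A-cP_\sigma)/(1-c)$ is again doubly stochastic — entrywise nonnegative because $c$ was subtracted only where $a_{ij}\ge c$, and all row and column sums equal $1-c$ before the rescaling — and it has at most $N-1$ positive entries, since the position attaining the minimum is zeroed out and no new positive entry appears. By the induction hypothesis $A'=\sum_k\lambda_k P_k$ with $\lambda_k\ge0$, $\sum_k\lambda_k=1$ and the $P_k$ permutation matrices, so $A=(1-c)A'+cP_\sigma$ is itself a convex combination of permutation matrices, completing the induction.

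The hard part is the existence of $\sigma$, i.e.\ the appeal to Hall's marriage theorem (equivalently the Frobenius--K\"onig theorem on bipartite matchings); everything else is bookkeeping about supports and row/column sums. An alternative I considered is to avoid induction and instead identify the extreme points of the Birkhoff polytope directly: one shows that a doubly stochastic matrix which is not a permutation matrix admits an alternating closed walk through its positive positions along which the entries can be perturbed in either direction without leaving the polytope, hence is not extreme, and then invokes the Minkowski--Krein--Milman theorem in finite dimension. I would present the inductive argument, since it is elementary, self-contained, and yields the convex decomposition constructively.
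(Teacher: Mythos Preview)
Your proof is correct and is one of the standard arguments for the Birkhoff--von~Neumann theorem. However, the paper does not actually give its own proof of this statement: the theorem is merely quoted in the Appendix with a citation~\cite{BvN} as a known result used in the proof of Theorem~\ref{theo:kekka2}. There is therefore nothing to compare against; your inductive argument via Hall's marriage theorem stands on its own as a valid, self-contained proof that goes well beyond what the paper provides.
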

There is one-to-one correspondence between CP maps and positive
operators on a larger Hilbert space. 
\begin{theorem*}[Choi~\cite{Choi75}]
Let $\mathcal{H}_A$ be a $d$-dimensional Hilbert space 
and let $\mathcal{H}_B$ be a copy of $\mathcal{H}_A$. 
Let $\{\ket{i}\}_{i\in\{1,2,\dots,d\}}$ be an orthonormal basis for 
each of $\mathcal{H}_A$ and $\mathcal{H}_B$. 
Then there is a one-to-one correspondence between 
a CP map
$\EE\in\LL\paren{\LL(\HH_A)}$ 
and a positive operator $E\in\LL\paren{\HH_A\ot\HH_B}$ such that 
\begin{align}
 \mathcal{E}(\rho)=\tr_B\left[(1\ot\rho^T)E\right],
\end{align}
where 
$T$ denotes the transpose with respect to the 
basis above. 
The operator $E$ is called the 
{\it \choio}\/ for $\mathcal{E}$ and is explicitly written as 
\begin{align}
 E:=(\mathcal{E}\ot \id)
 \paren{
   \kb\Psi
   }
\in \LL(\mathcal{H}_A\otimes\mathcal{H}_B), 
\label{eq:choi}
\end{align}
where $\ket\Psi:=\sum_i\ket{ii}$ 
is an unnormalized maximally entangled state. 
The CP map $\EE$ is trace-preserving if and only if 
$\Tr_A E=1$. 
\end{theorem*}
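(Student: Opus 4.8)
The plan is to prove the four ingredients bundled into the statement---the reconstruction formula, the two directions of the positivity/complete-positivity equivalence, the bijection, and the trace-preservation criterion---using as a single computational backbone the expansion of the maximally entangled projector in matrix units, $\kb\Psi=\sum_{i,j}\ket i\bra j\ot\ket i\bra j$. From the explicit definition \eqref{eq:choi} this immediately gives $E=\sum_{i,j}\EE(\ket i\bra j)\ot\ket i\bra j$, and every subsequent claim will follow by feeding this form into an appropriate trace.

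First I would verify the reconstruction formula. Substituting the expansion of $E$ into $\tr_B\brak{(1\ot\rho^T)E}$ and using $\tr\brak{\rho^T\ket i\bra j}=\bra i\rho\ket j=\rho_{ij}$, the $B$-trace collapses each term and leaves $\sum_{ij}\rho_{ij}\,\EE(\ket i\bra j)=\EE\paren{\sum_{ij}\rho_{ij}\ket i\bra j}=\EE(\rho)$, which is the asserted identity. The ``CP $\Rightarrow$ positive'' direction is then immediate: if $\EE$ is completely positive, then in particular $\EE\ot\id$ is positive (this is the $n=d$ instance of complete positivity), and since $\kb\Psi\ge0$ one gets $E=(\EE\ot\id)(\kb\Psi)\ge0$.

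The substantive direction is ``positive $\Rightarrow$ CP.'' Here I would spectrally decompose $E=\sum_k\kb{v_k}$ with $\ket{v_k}\in\HH_A\ot\HH_B$, and use the fact that $\ket\Psi$ is (unnormalized) maximally entangled to write each eigenvector uniquely as $\ket{v_k}=(A_k\ot1)\ket\Psi$ for some $A_k\in\LL(\HH)$; concretely, if $\ket{v_k}=\sum_{il}c^{(k)}_{il}\ket i\ket l$ one sets $A_k=\sum_{il}c^{(k)}_{il}\ket i\bra l$. Inserting this into the reconstruction formula and tracking indices, the transpose in $(1\ot\rho^T)$ conspires with the placement of $\ket\Psi$ so that $\tr_B\brak{(1\ot\rho^T)\kb{v_k}}=A_k\rho A_k^\dag$, whence $\EE(\rho)=\sum_k A_k\rho A_k^\dag$, a manifest Kraus form and therefore CP. This vectorization step, and the index bookkeeping that matches the reconstruction transpose to the Kraus form, is where I expect the only real friction.

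Finally I would assemble the bijection and the trace condition. Both maps $\EE\mapsto E$ and $E\mapsto\EE$ are linear between spaces of equal dimension $d^4$, and having shown their composite is the identity on $\LL(\LL(\HH_A))$, they are mutual inverses; restricting to the cone of CP maps on one side and of positive operators on the other (using the two positivity directions just established) yields the claimed one-to-one correspondence. For trace preservation I take the partial trace over $A$ of the matrix-unit expansion, obtaining $\tr_A E=\sum_{ij}\tr\brak{\EE(\ket i\bra j)}\,\ket i\bra j$; this equals $1=\sum_i\ket i\bra i$ exactly when $\tr\brak{\EE(\ket i\bra j)}=\d_{ij}$ for all $i,j$, which by linearity of $\tr\circ\EE$ is precisely the statement that $\EE$ preserves the trace.
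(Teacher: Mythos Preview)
Your argument is correct and is the standard proof of Choi's theorem: matrix-unit expansion of $\kb\Psi$, reconstruction via the $B$-partial trace, spectral decomposition of $E$ combined with the vectorization $\ket{v_k}=(A_k\ot1)\ket\Psi$ to extract a Kraus form, dimension counting for the bijection, and reading off the trace-preservation condition from $\tr_A E$. Each step is sound; the only cosmetic wrinkle is that in your reconstruction step the identity $\tr\brak{\rho^T\ket i\bra j}=\rho_{ij}$ is more directly seen as $\bra j\rho^T\ket i=(\rho^T)_{ji}=\rho_{ij}$ rather than via $\bra i\rho\ket j$, but the conclusion is the same.

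As for comparison with the paper: there is nothing to compare. The paper does not prove this theorem at all; it merely quotes it in the Appendix as a known fact with a citation to Choi~\cite{Choi75}, alongside the Birkhoff--von~Neumann and Ruskai--Szarek--Werner theorems, as background used in the main arguments and in the numerical scheme of Sec.~\ref{sec:numcon}. So your proposal supplies a self-contained proof where the paper simply defers to the literature.
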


Let $\HH$ be two-dimensional and $\EE$ be a Hermiticity-preserving 
linear map on $\LL(\HH)$. 
By a parametrization
\begin{align}
  &\mathcal{E}\biggparen{
    \sum_{\m=0}^3x^\m\sigma_\m
  }
  =
  \sum_{\m=0}^3x'{}^\m\sigma_\m, 
  \label{eq-matrix-expression}
\end{align} 
$\EE$ is expressed as a linear map $\R^4\to\R^4$, 
$(x^\m)\mapsto (x'{}^\m)$, ${\m=0,1,2,3}$. 
If $\EE$ is positive and trace-preserving, 
then there exist real $3\times3$ rotation matrices $R, \tilde{R}$ 
and 
real numbers 
$d_i,t_i\;(i=1,2,3)$ such that 
\begin{align}
  &|t_i|+|d_i| \le 1, 
  \label{eq-pos-abs}
  \\
  \begin{pmatrix}
    x'^0 \\
    x'^1 \\
    x'^2 \\
    x'^3 \\
  \end{pmatrix}
  &=
  \begin{pmatrix}
    1 & 0 \\
    R
    \begin{pmatrix}
      t_1\\t_2\\t_3
    \end{pmatrix}
    &
    R
    \begin{pmatrix}
      d_1&&\\&d_2&\\&&d_3
    \end{pmatrix}
    \tilde R
  \end{pmatrix}
  \begin{pmatrix}
    x^0 \\
    x^1 \\
    x^2 \\
    x^3 
  \end{pmatrix}. 
  \label{eq-map-matrix}
\end{align}
Ruskai {\em et al.}\/ gave 
the concrete parametrization of TPCP maps when $\HH$ is
two-dimensional, 
extending the work by 
Fujiwara and Algoet~\cite{Fuji99}. 

\begin{theorem*}[Ruskai-Szarek-Werner
~\cite{BethRuskai2002159}, Corollary 2 and Theorem 4] 
  Let $\EE$ 
  a positive, trace-preserving linear map 
  specified by 
  \eqref{eq-matrix-expression}, 
  \eqref{eq-pos-abs} and 
  \eqref{eq-map-matrix}. 

  (i) 
  The map 
  $\EE$ 
  is completely positive if and only if all of the following
  inequalities hold: 
  \begin{align} 
    &(d_1+d_2)^2\le (1+d_3)^2-t_3^2-(t_1^2+t_2^2)\left(\frac{1+d_3\pm
        t_3}{1-d_3\pm t_3}\right), 
    \label{eq:cptp1} \\ 
    &(d_1-d_2)^2\le (1-d_3)^2-t_3^2-(t_1^2+t_2^2)\left(\frac{1-d_3\pm
        t_3}{1+d_3\pm t_3}\right),
    \label{eq:cptp2}\\ 
    &\left[1-(d_1^2+d_2^2+d_3^2)-(t_1^2+t_2^2+t_3^2)\right]^2
    \notag\\ 
    &\ge
    4\left[d_1^2(t_1^2+d_2^2)+d_2(t_2^2+d_3^2)+d_3^2(t_3^2+d_1^2)-2d_1d_2d_3\right]. 
    \label{eq:cptp3} 
  \end{align}

 (ii) The map $\mathcal{E}$ is 
  in the closure of the set of extreme points of
  the space of TPCP maps 
  if and only if 
  there exist $R$, $\tilde R$ and $d_1, d_2\in[-1,1]$ such that 
  \begin{align}
    d_3=d_1d_2, 
    \q t_1=t_2=0, 
    \q t_3^2=(1-d_1^2)(1-d_2^2). 
    \label{eq-extreme-TPCP} 
  \end{align}
  \label{lem:ruskai}
\end{theorem*}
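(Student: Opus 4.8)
The plan is to prove both parts by passing to the Choi representation of the quoted Choi theorem and exploiting the fact that complete positivity and the extreme-point property are each invariant under composition with unitary conjugations $\Ad_U$, $\Ad_V$ (these are affine bijections of the convex set of TPCP maps onto itself). Using the normal form \eqref{eq-map-matrix}, one may therefore conjugate $\EE$ by the unitaries implementing the rotations $R$ and $\tilde R$, reducing to the \emph{canonical form} in which the linear part is $\diag[d_1,d_2,d_3]$ and the translation is $\mathbf t=(t_1,t_2,t_3)$. Every inequality in the statement is written in these rotation-invariant parameters, so a characterization established in the canonical form transports back verbatim. Throughout I would work in the basis $\brac{\s_\mu/\sqrt2}$ and use the identity $\kb\Psi=\f12\sum_\mu \s_\mu\ot\s_\mu^T$ to pass between a map and its \choio.

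For part (i), I would first compute the \choio\ of the canonical-form map. Writing $\EE(\s_0)=\s_0+\sum_j t_j\s_j$ and $\EE(\s_j)=d_j\s_j$, one obtains
\begin{align}
 E=\f12\Bigparen{\s_0\ot\s_0+\sum_{j=1}^3 t_j\,\s_j\ot\s_0+\sum_{j=1}^3 d_j\,\s_j\ot\s_j^T},
\end{align}
a $4\times4$ Hermitian matrix, and complete positivity is exactly $E\ge0$. Its diagonal entries carry $1\pm t_3\pm d_3$, the anti-diagonal entries carry $d_1\pm d_2$, and the parameters $t_1,t_2$ populate the remaining off-diagonal slots. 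I would then read $E\ge0$ off through Sylvester's criterion: the paired inequalities \eqref{eq:cptp1}--\eqref{eq:cptp2} (each containing a $\pm$) come from requiring the relevant $2\times2$ principal blocks, equivalently their Schur complements, to be positive, while the single inequality \eqref{eq:cptp3} is the nonnegativity of $\det E$ after the lower minors are accounted for. The unital case $\mathbf t=0$ is a useful consistency check: there $E$ is diagonal in the Bell basis with eigenvalues $\f12(1\pm d_1\pm d_2\pm d_3)$, whose nonnegativity collapses the conditions to the familiar tetrahedron $|d_1\pm d_2|\le|1\pm d_3|$.

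For part (ii) I would invoke Choi's criterion for extreme points of the convex set of TPCP maps: a map with a minimal Kraus decomposition $\brac{A_k}_{k=1}^r$ is extreme if and only if the operators $\brac{A_k^\dag A_l}_{k,l=1}^r$ are linearly independent in $\LL(\HH)$. Since $\dim\HH=2$ and linear independence forces $r^2\le4$, an extreme map has Kraus rank at most $2$. I would therefore take a general trace-preserving pair $A_1,A_2$ with $A_1^\dag A_1+A_2^\dag A_2=1$, impose linear independence of $\brac{A_k^\dag A_l}$, and compute the induced Bloch data. A convenient explicit family is $A_1=\diag[\cos\tfrac{\theta_1+\theta_2}{2},\cos\tfrac{\theta_1-\theta_2}{2}]$ together with the off-diagonal $A_2$ whose nonzero entries are $\sin\tfrac{\theta_1-\theta_2}{2}$ and $\sin\tfrac{\theta_1+\theta_2}{2}$; a direct evaluation of $\f12\tr[\s_\mu\EE(\s_\nu)]$ then gives $d_1=\cos\theta_2$, $d_2=\cos\theta_1$, $d_3=\cos\theta_1\cos\theta_2$ and $t_1=t_2=0$, $t_3=-\sin\theta_1\sin\theta_2$, which are precisely the relations $d_3=d_1d_2$ and $t_3^2=(1-d_1^2)(1-d_2^2)$ of \eqref{eq-extreme-TPCP}. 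Taking the closure adds the degenerate boundary $d_1=\pm1$ or $d_2=\pm1$, where $t_3=0$ and the map degenerates to rank one (a unitary); this is exactly what the word ``closure'' accommodates.

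The main obstacle is the explicit positivity analysis in part (i): with all of $t_1,t_2,t_3$ nonzero, $E$ does not block-diagonalize into independent $2\times2$ pieces, so extracting the clean determinant condition \eqref{eq:cptp3} requires a careful $4\times4$ computation together with an argument that the listed minors are \emph{sufficient}, not merely necessary, for $E\ge0$. A secondary difficulty in part (ii) is verifying that the linear-independence condition translates \emph{exactly} onto the stated algebraic variety and that the two-parameter family above \emph{exhausts} the extreme points rather than capturing only a subfamily; here the bound ``Kraus rank $\le2$'' from Choi's criterion is what keeps the parametrization finite and the verification tractable.
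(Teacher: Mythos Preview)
The paper does not prove this theorem; it is quoted in the Appendix as a known result from Ruskai--Szarek--Werner \cite{BethRuskai2002159} (their Corollary~2 and Theorem~4), so there is no ``paper's own proof'' to compare against. Your proposal is therefore not a reproduction but an independent proof sketch, and it is worth comparing it to what Ruskai \emph{et al.}\ actually do.

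For part (ii) your argument is essentially the one in \cite{BethRuskai2002159}: they too invoke Choi's extreme-point criterion, bound the Kraus rank by $2$, and arrive at exactly the two-angle family you write down (your $A_1,A_2$ is their canonical pair up to relabelling). The point you flag as a ``secondary difficulty''---that every rank-$\le2$ extreme map can be brought to this form by pre- and post-unitaries---is handled in their paper by the singular-value reduction you already used at the outset, so this gap closes.

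For part (i) your plan diverges from \cite{BethRuskai2002159}. They do \emph{not} work with the $4\times4$ Choi matrix and Sylvester minors; instead they use a $2\times2$ operator-matrix contraction criterion (their Theorem~1): the map is CP iff a certain matrix built from the Kraus data has operator norm $\le1$, which they then unwind into \eqref{eq:cptp1}--\eqref{eq:cptp3}. Your Choi-matrix route is in principle valid, but be careful: for positive \emph{semi}definiteness Sylvester's criterion on leading minors is not sufficient; you need all principal minors, or a Schur-complement argument that handles the semidefinite boundary. With $t_1,t_2\ne0$ the $4\times4$ matrix does not split into $2\times2$ blocks, so your claim that \eqref{eq:cptp1}--\eqref{eq:cptp2} arise from ``$2\times2$ principal blocks'' and \eqref{eq:cptp3} from $\det E$ alone is not obviously a complete set of conditions. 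The contraction approach of \cite{BethRuskai2002159} sidesteps exactly this bookkeeping, which is why they chose it.
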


\end{document}